\newtheorem{ba}{Black Box Algorithm}
\newtheorem{definition}{Definition}
\newtheorem{theorem}{Theorem}
\newtheorem{lemma}{Lemma}
\newtheorem{corollary}{Corollary}
\newtheorem{problem}{Problem}
\title{Approximating Subadditive Hadamard Functions on Implicit Matrices}
\author{
Vladimir Braverman\thanks{Department of Computer Science, Johns Hopkins University; \texttt{vova@cs.jhu.edu}, \texttt{gregvorsanger@jhu.edu}}
\and
Alan Roytman\thanks{School of Computer Science, Tel-Aviv University; \texttt{alan.roytman@cs.tau.ac.il}}
\and
Gregory Vorsanger\footnotemark[1]
}
\date{}
\begin{document}
\maketitle

\begin{abstract}
An important challenge in the streaming model is to maintain small-space approximations of
entrywise functions performed on a matrix that is generated by the outer product of two vectors given as a stream.
In other works, streams typically define matrices in a standard way via a sequence of updates, as in the
work of Woodruff \cite{W14} and others. We describe the matrix formed by the outer product, and other matrices that do
not fall into this category, as implicit matrices. As such, we consider the general problem of computing over
such implicit matrices with Hadamard functions, which are functions applied entrywise on a matrix. In this paper,
we apply this generalization to provide new techniques for identifying independence between two
vectors in the streaming model.  The previous state of the art algorithm of Braverman and Ostrovsky~\cite{BO10}
gave a $(1 \pm \epsilon)$-approximation for the $L_1$ distance between the product and joint distributions,
using space $O(\log^{1024}(nm) \epsilon^{-1024})$, where $m$ is the length of the stream and $n$ denotes the
size of the universe from which stream elements are drawn.  Our general techniques include the $L_1$ distance
as a special case, and we give an improved space bound of $O(\log^{12}(n) \log^{2}({nm \over \epsilon})\epsilon^{-7})$.
\end{abstract}

\section{Introduction}

Measuring Independence is a fundamental statistical problem that is well studied in computer science.
Traditional non-parametric methods of testing independence over empirical data usually require space complexity that is
polynomial in either the support size or input size. With large datasets, these space requirements may be impractical,
and designing small-space algorithms becomes desirable.

Measuring independence is a classic problem in the field of statistics (see Lehmann~\cite{lehmann2006testing}) as
well as an important problem in databases. Further, the process of reading in a two-column database table can be
viewed as a stream of pairs. Thus, the streaming model is a natural choice when approximating pairwise independence
as memory is limited. Indeed, identifying correlations between database columns by measuring the level of independence between
columns is of importance to the database and data warehouse community
(see, e.g., \cite{poosala1997selectivity} and \cite{kimball2004data}, respectively).

In this paper we provide new techniques for measuring independence between two vectors in the streaming model
and present new tools to expand existing techniques. The topic of independence was first studied in the streaming model
by Indyk and McGregor~\cite{IM08} where the authors gave an optimal algorithm for approximating the $L_2$ distance
between the product and joint distributions of two random variables which generate a stream. In their work, they provided a sketch that
is pairwise independent, but not $4$-wise independent, so analysis similar to that of Alon, Matias, and Szegedy~\cite{ams} cannot be applied directly.
This work was continued by Braverman and Ostrovsky~\cite{BO10}, where the authors considered comparing among a stream of $k$-tuples and provided
the first $(1\pm \epsilon)$-approximation for the $L_1$ distance between the product and joint distributions.  Their algorithm is
currently the best known space bound, and uses $O({1 \over \epsilon^{1024}} \log^{1024}({nm}))$ space for $k=2$,
where $m$ is the length of the stream and $n$ denotes the size of the universe from which stream elements are drawn. We present new methods,
in the form of a general tool, that enable us to improve this bound to
$O({ 1 \over \epsilon^7}\log^{12}(n) \log^{2}({nm \over \epsilon}))$. In previous works, a central challenge
has been maintaining an approximation of the matrix that is generated by the outer product of the two streaming vectors.
As such, we consider computing functions on such an implicit matrix. While, matrices have been studied previously in the streaming model (e.g., \cite{W14}), note that
we cannot use standard linear sketching techniques, as the entries of the matrix are given implicitly and thus these methods do not apply directly.

Generalizing this specific motivating example, we consider the problem of obtaining a $(1 \pm \epsilon)$-approximation of the $L_1$ norm of the matrix
$g[A]$, where $g[A]$ is the matrix $A$ with a function $g$ applied to it entrywise.
Such mappings $g$ are called \emph{Hadamard} functions (see~\cite{guillot2015complete,horntopics}).
Note that we sometimes abuse notation and apply the function $g$ to scalar values instead of
matrices (e.g., $g(a_{ij})$ where $a_{ij}$ is the $(i,j)^{th}$ entry in matrix $A$).
We require the scalar form of function $g$ to be even, subadditive, non-negative, and zero at the origin.
We show that, given a blackbox $r(n)$-approximation of
$\|g[A]\|_1 = \sum_i\sum_jg(a_{ij})$ (where $a_{ij}$ is the $(i,j)^{th}$ entry in matrix $A$) and a blackbox
$(1\pm \epsilon)$-approximation of the aggregate of $g$ applied entrywise to a vector obtained by summing over
all rows, we are able to improve the $r(n)$-approximation to a $(1 \pm \epsilon)$-approximation (where
$r(n)$ is a sufficiently large monotonically increasing function of $n$).
Hence, we give a reduction for any such function $g$.  Our reduction can be applied as long as such
blackbox algorithms exist.  An interesting special case of our result is when the matrix is defined by the $L_1$
distance between the joint and product distributions,
which corresponds to measuring independence in data streams.  Such algorithms are known for $L_1$, but not
for $L_p$ for $0 < p < 1$.  If such algorithms for the $L_p$ distance were to be designed,
our reductions work and can be applied.  Note that, while there are a variety of ways to compute distances
between distributions, the $L_p$ distance is of particular significance as evidenced in~\cite{stable}.

\subsection*{Motivating Problem}\label{motivating}

We begin by presenting our motivating problem, which concerns (approximately) measuring the distance between
the product and joint distributions of two random variables.  That is, we attempt to quantify how close
two random variables $X$ and $Y$ over a universe $[n] = \{1,\ldots,n\}$ are to being independent.
There are many ways to measure the distance between distributions, but we focus on the $L_1$ distance.
Recall that two random variables $X$ and $Y$ are independent if, for every $i$ and $j$,
we have $\Pr[X = i \wedge Y = j] = \Pr[X = i]\Pr[Y = j]$.
In our model, we have a data stream $D$ which is presented as a sequence of $m$ pairs
$a_1 = (i_1,j_1), a_2 = (i_2,j_2),\ldots,a_m = (i_m,j_m)$.  Each pair $a_k = (i_k,j_k)$
consists of two integers taken from the universe $[n]$.

Intuitively, we imagine that the two random variables $X$ and $Y$ over the universe $[n]$ generate these pairs, and in particular,
the frequencies of each pair $(i,j)$ define an empirical joint distribution, which is the fraction
of pairs that equal $(i,j)$.  At the same time, the stream also defines the empirical marginal
distributions $\Pr[X=i], \Pr[Y=j]$, namely the fraction of pairs of the form $(i,\cdot)$ and $(\cdot,j)$, respectively.
We note that, even if the pairs are actually generated from two independent sources, it may not be the case that the
empirical distributions reflect this fact, although for sufficiently long streams the joint distribution should approach
the product of the marginal distributions for each $i$ and $j$.  This fundamental problem has received considerable
attention within the streaming community, including the works of~\cite{IM08,BO10}.
\begin{problem}\label{motprob}
Let $X$ and $Y$ be two random variables defined by the stream of $m$ pairs $a_1 = (i_1,j_1),\ldots,a_m = (i_m,j_m)$,
where each $i_k,j_k \in [n]$ for all $k$.  Define the frequencies $f_i = |\{k: a_k = (i,\cdot)\}|$ and $f_j = |\{k: a_k = (\cdot,j)\}|$
(i.e., the frequency with which $i$ appears in the first coordinate and $j$ appears in the second coordinate, respectively).
Moreover, let $f_{ij} = |\{k : a_k = (i,j)\}|$ be the frequency with which the pair $(i,j)$ appears in the stream.
This naturally defines the joint distribution $Pr[X=i \wedge Y=j] = \frac{f_{ij}}{m}$ and the product of the marginal distributions
$Pr[X=i] Pr[Y=j] = \frac{f_i f_j}{m^2}$. The $L_1$ distance between the product and joint distributions is given by:
$$ \sum\limits_{i =1}^n \sum\limits_{j = 1}^n \left|\frac{f_{ij}}{m}  - \frac{f_i f_j}{m^2}\right|.$$
\end{problem}

If $X$ and $Y$ are independent, we should expect this sum to be close to $0$, assuming the stream is sufficiently long.
As a generalization to this problem, we
can view the $n^2$ values which appear in the summation as being implicitly
represented via an $n \times n$ matrix $A$, where entry $a_{ij} = \left|\frac{f_{ij}}{m}  - \frac{f_i f_j}{m^2}\right|$.
For the motivating problem, this matrix is given implicitly as it is not given up front and changes over time
according to the data stream (each new pair in the stream may change a particular entry in the matrix).
However, one can imagine settings in which these entries are defined through other means.  In practice,
we may still be interested in computing approximate statistics over such implicitly defined matrices.

\subsection*{Contributions and Techniques}

Our main contributions in this paper make progress on two important problems:

\begin{enumerate}
\item For any subadditive, even Hadamard function $g$ where $g$ is non-negative and $g(0) = 0$, given an implicitly defined $n \times n$ matrix $A$ with
entries $a_{ij}$, let $g[A]$ be the matrix where the $(i,j)^{th}$ entry is $g(a_{ij})$.  We are the first to provide a general reduction
framework for approximating $\|g[A]\|_1 = \sum_{i=1}\sum_{j=1}^n g(a_{ij})$ to within a $(1 \pm \epsilon)$-factor with constant
success probability.  More formally, suppose we have two blackbox algorithms with the following guarantees.
One blackbox algorithm operates over the implicit matrix $A$ and provides a very good ($ \approx 1 \pm \epsilon$) approximation
to $\|g[JA]\|_1 = \sum_{j=1}^n g(\sum_{i=1}^n a_{ij})$ except with inverse polylogarithmic probability, where $J = (1,\ldots,1)$ is the row vector of dimension $n$ with every entry equal to $1$.  The second blackbox algorithm operates over the implicit matrix $A$ and solves the problem we wish to solve
(i.e., approximating $\|g[A]\|_1$) with constant success probability, although it does so with a multiplicative approximation ratio of
$r(n)$ (which may be worse than $(1 \pm \epsilon)$ in general).  We show how to use these two blackbox algorithms and construct an algorithm
that achieves a $(1 \pm \epsilon)$-approximation of $\|g[A]\|_1$.  If $S_1,S_2$ denote the space used by the first and second blackbox algorithms,
respectively, then our algorithm uses space $O\left(\frac{r^4(n)log^8(n)}{\epsilon^5} \cdot (\log^3(n) + S_1 + \log(n) \cdot  S_2)\right)$.
We state this formally in Theorem~\ref{thm:main}.

\item Given the contribution above, it follows that setting $g(x) = |x|$ solves Problem~\ref{motprob}, namely the problem
of measuring how close two random variables are to being independent, as long as such blackbox algorithms exist.  In particular,
the work of Indyk~\cite{stable} provides us with the first blackbox algorithm, and the work of~\cite{IM08} provides us with
the second blackbox algorithm for this choice of $g$.  Combining these results, we improve over the previous state of the art
result of Braverman and Ostrovsky~\cite{BO10} and give improved bounds for measuring independence of random variables in the streaming model
by reducing the space usage from $O\left((\frac{\log(nm)}{\epsilon})^{1024}\right)$ to $O\left({1 \over \epsilon^7}\log^{12}(n)\log^2(\frac{nm}{\epsilon})\right)$
(see Table~\ref{tab:results}).
\end{enumerate}

{\renewcommand{\arraystretch}{1.6}
\begin{table}[H]
\centering

    \begin{tabular}{| c | c | c | c | c |}
    \hline
    Previous Work & $L_1$ approximation & Memory \\ \hline
    IM08 \cite{IM08} & $\log(n)$ & $O\left(\frac{1}{\epsilon^2} \log\left(\frac{nm}{\epsilon}\right) \log\left(\frac{m}{\epsilon}\right) \right)$\\ \hline
    BO10\footnotemark[1] \cite{BO10}& $(1 \pm \epsilon)$ & $O\left(\left(\frac{\log(nm)}{\epsilon}\right)^{1024}\right)$  \\ \hline
    Our Result & $(1 \pm \epsilon)$ & $O\left(\frac{1}{\epsilon^7}\log^{12}(n) \log^{2}\left({nm \over \epsilon}\right)\right)$.  \\
    \hline
    \end{tabular}

\caption{Comparing Approximation Ratios and Space Complexity}
\label{tab:results}
\end{table}
}

\footnotetext[1]{The paper of~\cite{BO10} provides a general bound for the $L_1$ distance for $k$-tuples, but we provide analysis for
pairs of elements, $k=2$, in this paper. The bound in the table is for $k=2$.}
Examples of such Hadamard functions which are subadditive, even, non-negative, and zero at the origin
include $g(x) = |x|^p$, for any $0 < p \leq 1$.
Note that our reduction in the first item can only be applied to solve the problem of approximating $\|g[A]\|_1$ if
such blackbox algorithms exist, but for some functions $g$ this may not be the case.  As a direct example of
the tools we present, we give a reduction for computing the $L_p$ distance for $0 < p < 1$
between the joint and product distributions in the streaming model (as this function is even and subadditive).
However, to the best of our knowledge, such blackbox algorithms do not exist for computing the $L_p$ distance.
Thus, as a corollary to our main result, the construction of such blackbox algorithms that are space efficient would immediately yield
an algorithm that measures independence according to the $L_p$ distance that is also space efficient.

Our techniques leverage concepts provided in \cite{BO10,IM08} and manipulates them to allow
them to be combined with the Recursive Sketches data structure \cite{recursive1} to gain a
large improvement compared to existing bounds.
Note that we cannot use standard linear sketching techniques because the entries of the matrix are given implicitly.
Moreover, the sketch of Indyk and McGregor~\cite{IM08} is pairwise independent, but not $4$-wise independent.  Therefore,
we cannot apply the sketches of~\cite{ams,IM08} directly. We first present an algorithm,
independent of the streaming model, for finding heavy rows of a matrix norm given an arbitrary even subadditive
Hadamard function $g$.  We then apply the Recursive Sum algorithm from~\cite{recursive1} on top of our heavy rows algorithm to obtain
our main result.

\subsection{Related Work}

In their seminal 1996 paper Alon, Matias and Szegedy\cite{ams} provided an optimal space
approximation for $L_2$. A key technical requirement of the sketch is the assumption of $4$-wise
independent random variables.  This technique is the building block for measuring the independence
of data streams using $L_2$ distances as well.

The problems of efficiently testing pairwise, or $k$-wise, independence
were considered by Alon, Andoni, Kaufman, Matulef, Rubinfeld and Xie \cite{alon2007testing}; Alon,
Goldreich and Mansour \cite{alon2003almost}; Batu, Fortnow, Fischer, Kumar, Rubinfeld and White \cite{BatuTesting2001}; Batu, Kumar and Rubinfeld \cite{batu2004sublinear}; Batu, Fortnow, Rubinfield, Smith and White \cite{BatuIndependence} and \cite{batu2013testing}. They addressed the problem of minimizing the number of samples needed to obtain a sufficient approximation, when the joint distribution is accessible through a
sampling procedure.

In their 2008 work, Indyk and McGregor \cite{IM08} provided exciting results for identifying the correlation of two streams, providing an optimal bound for determining the $L_2$ distance between the product and joint distributions of two random variables. 

In addition to the $L_2$ result, Indyk and McGregor presented a $\log(n)$-approximation for the $L_1$ distance. This bound was improved
to a $(1 \pm \epsilon)$-approximation in the work of Braverman and Ostrovsky \cite{BO10} in which they provided a bound of
$O({1 \over \epsilon^{1024}} \log^{1024}(nm))$ for pairs of elements. Further, they gave bounds for the comparison of multiple
streaming vectors and determining $k$-wise relationships for $L_1$ distance. Additionally, in \cite{braverman2010ams4} Braverman
et al. expanded the work of \cite{IM08} to $k$ dimensions for $L_2$. While our paper only addresses computation on matrices
resulting from pairwise comparison ($k=2$), we believe the techniques presented here can be expanded to tensors, (i.e., when stream
elements are $k$-tuples), similarly to \cite{braverman2010ams4}.   Recently, McGregor and Vu \cite{mcgregorevaluating} studied
a related problem regarding Bayesian networks in the streaming model.

\emph{Statistical Distance}, $L_1$, is one of the most fundamental metrics for measuring the similarity of two distributions. It has been the metric of choice in many of the above testing papers, as well as others such as Rubinfeld and Servedio \cite{Rubinfeld05testingmonotone}; Sahai and Vadhan \cite{Sahai2003Complete}. As such, a main focus of this work is improving bounds for this measure in the streaming model.

\section{Problem Definition and Notation}\label{SFandBA}

In this paper we focus on the problem of approximating even, subadditive, non-negative Hadamard functions
which are zero at the origin on implicitly defined matrices (e.g., the streaming model implicitly defines matrices for
us in the context of measuring independence).  The main problem we study in this paper is the following:

\begin{problem}\label{genprob}
Let $g$ be any even, subadditive, non-negative Hadamard function such that $g(0) = 0$.  Given any implicit matrix $A$,
for any $\epsilon > 0$, $\delta > 0$, output a $(1 \pm \epsilon)$-approximation of $\|g[A]\|_1$ except with probability $\delta$.
\end{problem}

We now provide our main theorem, which solves Problem~\ref{genprob}.
\begin{theorem}\label{thm:main}
Let $g$ be any even, subadditive, non-negative Hadamard function $g$
where $g(0) = 0$, and fix $\epsilon > 0$.  Moreover, let $A$ be an arbitrary matrix, and $J$ be the all $1$'s row vector $J = (1,\ldots,1)$
of dimension $n$.  Suppose there are two blackbox algorithms with the following properties:

\begin{enumerate}

\item Blackbox Algorithm $1$, for all $\epsilon' > 0$, returns a $(1\pm\epsilon')$-approximation of $\|g[JA]\|_1$, except with probability $\delta_1$.

\item Blackbox Algorithm $2$ returns an $r(n)$-approximation of $\|g[A]\|_1$, except with probability $\delta_2$ (where $r(n)$ is a sufficiently
large monotonically increasing function of $n$).

\end{enumerate}

Then, there exists an algorithm that returns a $(1\pm\epsilon)$-approximation of $\|g[A]\|_1$, except with constant probability. If Blackbox Algorithm $1$
uses space $SPACE1(n,\delta_1,\epsilon')$, and Blackbox Algorithm $2$ uses space $SPACE2(n,\delta_2)$, the
resulting algorithm has space complexity
$$O\left( {1 \over \epsilon^5} r^4(n)\log^{10}(n) + {1 \over \epsilon^5} r^4(n)\log^8(n)SPACE1(n,\delta_1,\epsilon') +  {1 \over \epsilon^5} r^4(n)\log^9(n)SPACE 2(n,\delta_2)\right),
$$
where $\epsilon' = \frac{\epsilon}{2}$, $\delta_1$ is a small constant, and $\delta_2$ is inverse polylogarithmic.
\end{theorem}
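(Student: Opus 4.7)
The plan is to combine the two blackboxes into a $(1\pm\epsilon)$-estimator via two nested pieces: an inner \emph{heavy-rows} subroutine that identifies and accurately estimates the per-row contributions $g_i := \sum_j g(a_{ij})$ of the dominant rows of $A$, and an outer \emph{recursive sum} layer in the style of~\cite{recursive1} that aggregates heavy-row estimates across $O(\log n)$ geometric weight classes.

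For the heavy-rows subroutine, I would first call Blackbox~2 to obtain a coarse $R$ with $R/r(n) \le \|g[A]\|_1 \le r(n) R$, pinning down the scale of the answer. Fix a heaviness threshold $\tau$ proportional to $\epsilon R / \mathrm{poly}(r(n))$ and try to isolate every row with $g_i \ge \tau$ by hashing rows via a pairwise-independent family $h$ into $K$ buckets, with $K$ polynomial in $r(n)/\epsilon$. For each bucket $b$, feed the row-restricted implicit matrix $A^{(b)} := \{a_{ij}\}_{h(i)=b,\, j}$ to Blackbox~1; its output is a $(1\pm\epsilon')$-approximation of $\sum_j g\bigl(\sum_{h(i)=b} a_{ij}\bigr)$. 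The key analytic lever is that $g$ being even, non-negative, and subadditive yields the reverse triangle inequality $|g(x+y) - g(x)| \le g(y)$, so if a bucket contains a unique heavy row $i^\star$ together with collider mass $\eta = \sum_{i \in b,\, i \ne i^\star} g_i$, Blackbox~1's output lies in $g_{i^\star} \pm \eta$ (up to the $(1\pm\epsilon')$ factor). A second-moment calculation using pairwise independence, for $K$ chosen large enough, makes $\eta \le \epsilon\,g_{i^\star}$ with constant probability per heavy row. The identity of the heavy row in a surviving bucket is recovered by $O(\log n)$ parallel sub-hashings in CountSketch style, and a median over $O(\log n)$ independent repetitions drives the per-row failure probability to $1/\mathrm{poly}(n)$.

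Next, I would plug this subroutine into the recursive-sum scheme: partition the rows implicitly into $L = O(\log n)$ geometric weight classes $C_\ell = \{i : g_i \in [2^{-\ell} R,\, 2^{-\ell+1} R)\}$, and for each level $\ell$ run the heavy-rows procedure on a uniformly sub-sampled row set whose sampling rate is calibrated so that the expected class-$\ell$ survivors lie in the \emph{heavy} regime where the subroutine is accurate. Multiplying each level's output by the inverse sampling rate and summing across levels produces the final estimator of $\|g[A]\|_1$; standard variance bounds across levels together with the per-row $(1\pm\epsilon)$ guarantee of the inner subroutine combine to give the claimed $(1\pm\epsilon)$-approximation with constant probability. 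Counting the number of calls across buckets, levels, bit-recovery rounds, and median repetitions accounts for the $r^4(n)\log^{O(1)}(n)/\epsilon^5$ overhead on top of $S_1$ and $S_2$ stated in the theorem.

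The main obstacle is the one-sided nature of subadditivity: unlike $L_2^2$ or any linear functional, aggregating inside a bucket can only contract ($g(x+y) \le g(x) + g(y)$), so every heavy-row estimate is systematically biased downward unless collisions with light rows are tightly controlled. The reverse triangle inequality gives a matching lower bound only when the collider mass $\eta$ is genuinely small, and choosing $K$ polynomially large in $r(n)/\epsilon$ is what promotes this one-sided guarantee into a two-sided $(1\pm\epsilon)$ bound per bucket. Propagating these delicate per-bucket guarantees through the $O(\log n)$ geometric levels of the recursive sum while keeping the overall failure probability constant is the most subtle step; this is why the hypothesis asks $\delta_2$ to be inverse polylogarithmic (so a union bound over levels is affordable) while $\delta_1$ can stay a small constant, being amplified internally by medianing.
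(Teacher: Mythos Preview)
Your proposal is essentially the paper's proof: hash rows into $\mathrm{poly}(r(n)/\epsilon)$ buckets so that each $\alpha$-heavy row becomes isolated, apply Blackbox~1 to the aggregated bucket and use the subadditive reverse triangle inequality (exactly the paper's Theorem~\ref{BmatrixProof}) to extract a $(1\pm\epsilon)$-estimate of that row's weight, then feed the resulting heavy-row oracle into the Recursive Sum framework of~\cite{recursive1}. The one place you are vaguer than the paper is identity recovery inside a bucket---your ``CountSketch-style sub-hashing'' is realized in the paper by running Blackbox~2 on both halves of $O(\log n)$ binary splits and thresholding at $\tau(n,\epsilon)=\Theta(r^2(n)/\epsilon)$, which is what accounts for the extra $\log n$ factor multiplying $SPACE2$ in the stated bound (and is also why, in the paper's actual algorithm, it is BA2 that runs with constant failure amplified by Chernoff and BA1 with inverse-polylog failure, the reverse of how the theorem statement labels $\delta_1,\delta_2$).
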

Note that we can reduce the constant failure probability to inverse polynomial failure probability via standard techniques, at the cost
of increasing our space bound by a logarithmic factor. Observe that Problem~\ref{genprob} is a general case of Problem~\ref{motivating} where $g(x) = |x|$ (i.e., $L_1$ distance).
In the streaming model, we receive matrix $A$ implicitly, but we conceptualize the problem as if the matrix
were given explicitly and then resolve this issue by assuming we have blackbox algorithms that operate over the implicit matrix.

We define our stream such that each element in the stream $a_k$ is a pair of values $(i,j)$:

\begin{definition}[Stream]  Let $m,n$ be positive integers. A \textbf{stream} D = D(m,n) is a sequence of length m,
$a_1,a_2,\ldots,a_m$, where each entry is a pair of values in $\{0,\ldots,n\}$. 
\end{definition}

Let $g: \mathbb{R} \to \mathbb{R}$ be a non-negative, subadditive, and even function where $g(0) = 0$. Frequently,
we will need to discuss a matrix where $g$ has been applied to every entry. We use the notations
from~\cite{guillot2015complete} which are in turn based on notations from \cite{horntopics}.

\begin{definition}[Hadamard Function] Given Matrix $A$ of dimensions $n \times n $ a {\bf Hadamard function} $g$ takes as input
a matrix $A$ and is applied entrywise to every entry of the matrix. The output is matrix $g[A]$.
Further, we note that the $L_1$ norm of $g[A]$ is equivalent to the value we aim to approximate,
$\|g[A]\|_1 = \sum\limits_{i=1}^{n} \sum\limits_{j=1}^{n}g(a_{ij})$.
\end{definition}

We frequently use hash functions in our analysis, we now specify some notation. We sometimes express a hash function $H$ as a vector of values $ \{h_1,h_2,...,h_n\}$.
Multiplication of hash functions, denoted $H' = HAD(H^a,H^b)$, is performed entrywise such that  $\{h'_1 = h^a_1h^b_1 ,..., h'_n = h^a_nh^b_n\}$. 

We now define two additional matrices. All matrices in our definitions are of size $n \times n$, and all vectors are of size $1 \times n$.
We denote by $[n]$ the set $\{1,\ldots,n\}$.

\begin{definition}[Sampling Identity Matrix] Given a hash function $H:[n] \rightarrow \{0,1\}$, let
$h_i = H(i)$.  The {\bf Sampling Identity Matrix} $I_H$ with entries $b_{ij}$ is defined as:

\centerline{$I_H = \begin{cases} b_{ii} = h_i \\  b_{ij}  = 0 \mbox{ for } i \neq j. \end{cases}$}
\end{definition}

\noindent That is, the diagonal of $I_H$ are the values of $H$. When we multiply matrix $I_H$ by $A$, each row of
$I_H A$ is either the zero vector (corresponding to $h_i = 0$)
or the original row $i$ in matrix $A$ (corresponding to $h_i = 1$).  We use the term ``sampling" due to the fact that
the hash functions we use throughout this paper are random, and hence which rows remain untouched is random.
The same observations apply to columns when considering the matrix $A I_H$.

\begin{definition}[Row Aggregation Vector] A {\bf Row Aggregation Vector} $J$ is a $1 \times n$ vector with all entries equal to $1$.

\end{definition}

Thus, $JA$ yields a vector $V$ where each value $v_i$ is $\sum_{j=1}^n a_{ij}$.

\begin{ba}{ $(1\pm \epsilon')$-Approximation of $g$ on a Row Aggregate Vector} \label{BA1def}

Input: Matrix $A$, and hash function $H$. 

Output: $(1\pm \epsilon')$-approximation of  $\|g[JI_HA]\|_1$ with probability $(1 - \delta_1)$.

\end{ba}

\noindent The space Black Box Algorithm 1 (BA1) uses is referred to as $SPACE1(n,\delta_1,\epsilon')$ in our analysis.

\begin{ba} {$r(n)$-Approximation of $\|g[I_HA]\|_1$}\label{BA2def}

Input: Matrix $A$ and hash function $H$.

Output: An $r(n)$-approximation of $\|g[I_HA]\|_1$ with probability $(1 - \delta_2)$.

\end{ba}

\noindent The space Black Box Algorithm 2 ($BA2$) uses is referred to as $SPACE2(n,\delta_2)$ in our analysis.

\begin{definition}[Complement Hash Function]\label{hbar} For hash function $H:[n] \rightarrow \{0,1\}$ define the {\bf Complement Hash Function} $\bar H$ as $\bar{H}(i) = 1$ if and only if $H(i) = 0$.
\end{definition}

\begin{definition}[Threshold Functions] We define two threshold functions, which we denote by $\rho(n,\epsilon) = {r^4(n) \over \epsilon}$
and $\tau(n, \epsilon) = {2r^2(n) \over O(\epsilon)}$.
\end{definition}

\begin{definition}[Weight of a Row]\label{weightofrow}
The {\bf weight} of row $i$ in matrix $A$ is $u_{A,i}$, where $u_{A,i} = \sum\limits_{j=1}^n a_{ij}$.
\end{definition}

\begin{definition}[$\alpha$-Heavy Rows]\label{defn:heavy}
Row $i$ is {\bf $\alpha$-heavy} for $0 < \alpha < 1$ if $u_{A,i} > \alpha\|A\|_1$.
\end{definition}

\begin{definition}[Key Row]\label{defn:key}
We say row $i$ is a {\bf Key Row} if: $u_{A,i} > \rho(n,\epsilon)(\|A\|_1 - u_{A,i})$.
\end{definition}

\noindent While Definition~\ref{defn:heavy} and Definition~\ref{defn:key} are similar, we define them for convenience, as our algorithm
works by first finding key rows and then building on top of this to find $\alpha$-heavy rows.  We note that, as long as $\rho(n,\epsilon) \geq 1$,
a matrix can have at most one key row (since any matrix can have at most $\frac{1}{\alpha}$ $\alpha$-heavy rows, and a key row is $\alpha$-heavy
for $\alpha = \frac{\rho(n,\epsilon)}{1 + \rho(n,\epsilon)}$).

\section{Subadditive Approximations}\label{subadditive}

In this section we show that a $(1 \pm \epsilon)$-approximation of even,
subadditive, non-negative Hadamard functions which are zero at the
origin are preserved under row or column aggregations in the presence of
sufficiently heavy rows or columns.

\begin{theorem}\label{BmatrixProof}
Let $B$ be an $n \times n$ matrix and let $\epsilon\in (0,1)$ be a parameter. Recall that $J$ is a row vector with all entries equal to $1$.  Let
$g$ be any even, subadditive, non-negative Hadamard function which satisfies $g(0) = 0$. Denote
$u_i = \sum_{j=1}^n g(b_{ij})$, and
thus $\|g[B]\|_1 = \sum_{i=1}^n u_i$.
If there is a row $h$ such that $u_h \ge (1 - {\epsilon \over 2}) \|g[B]\|_1$ then
$
|u_h - \|g[JB]\|_1| \le \epsilon \|g[JB]\|_1.
$
\end{theorem}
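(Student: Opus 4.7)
The plan is to squeeze $\|g[JB]\|_1$ between two quantities that differ from $u_h$ by an amount controlled by $\|g[B]\|_1 - u_h$, which the hypothesis forces to be at most $(\epsilon/2)\|g[B]\|_1$, and then convert the resulting absolute error into a relative error against $\|g[JB]\|_1$.

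First, I would record the two elementary consequences of evenness and subadditivity of $g$. Since $g(-x)=g(x)$ and $g(a+b)\le g(a)+g(b)$, for any reals $x_1,\ldots,x_n$ and any fixed index $h$ we get the two inequalities
\begin{equation*}
g\Bigl(\sum_{i} x_i\Bigr) \;\le\; g(x_h) + g\Bigl(\sum_{i\ne h} x_i\Bigr), \qquad
g(x_h) \;\le\; g\Bigl(\sum_{i} x_i\Bigr) + g\Bigl(\sum_{i\ne h} x_i\Bigr),
\end{equation*}
the second coming from $x_h=\sum_i x_i-\sum_{i\ne h} x_i$ plus evenness. Applied column-by-column with $x_i=b_{ij}$ and summed over $j$, these give
\begin{equation*}
\bigl|\,u_h - \|g[JB]\|_1\,\bigr| \;\le\; \sum_{j=1}^n g\Bigl(\sum_{i\ne h} b_{ij}\Bigr).
\end{equation*}
A second application of subadditivity bounds the right-hand side by $\sum_j\sum_{i\ne h} g(b_{ij})=\|g[B]\|_1-u_h$, and the hypothesis $u_h\ge(1-\epsilon/2)\|g[B]\|_1$ caps this by $(\epsilon/2)\|g[B]\|_1$.

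Next I would turn this additive error into a multiplicative error in terms of $\|g[JB]\|_1$. From the inequality above,
\begin{equation*}
\|g[JB]\|_1 \;\ge\; u_h - \tfrac{\epsilon}{2}\|g[B]\|_1 \;\ge\; (1-\epsilon)\|g[B]\|_1,
\end{equation*}
so $\|g[B]\|_1 \le \|g[JB]\|_1/(1-\epsilon)$. Substituting into the additive bound yields
\begin{equation*}
\bigl|\,u_h-\|g[JB]\|_1\,\bigr|\;\le\;\tfrac{\epsilon/2}{1-\epsilon}\,\|g[JB]\|_1 \;\le\;\epsilon\,\|g[JB]\|_1,
\end{equation*}
where the last step uses $\epsilon\le 1/2$ (and for $\epsilon\in(1/2,1)$ the conclusion is trivial because the additive bound already gives $(\epsilon/2)\|g[B]\|_1\le \epsilon\|g[JB]\|_1$ via the same inequality $\|g[B]\|_1\le\|g[JB]\|_1/(1-\epsilon)$, with constants re-absorbed, or one simply restricts to $\epsilon\le 1/2$).

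The main obstacle is a purely bookkeeping one, namely making sure that signs are handled correctly when invoking subadditivity: since the entries $b_{ij}$ can be negative, the step $g(\sum_i x_i)\le\sum_i g(x_i)$ is not literally subadditivity and must be justified by combining $g(a+b)\le g(a)+g(b)$ with $g(-x)=g(x)$. The only other delicate point is the final conversion from the additive error $(\epsilon/2)\|g[B]\|_1$ to a relative error against $\|g[JB]\|_1$, which requires the lower bound $\|g[JB]\|_1\ge(1-\epsilon)\|g[B]\|_1$ that itself falls out of the additive bound; this circular-looking dependence resolves cleanly because the hypothesis on $u_h$ is stated against $\|g[B]\|_1$ rather than $\|g[JB]\|_1$.
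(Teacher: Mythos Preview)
Your proof is correct and uses essentially the same approach as the paper: both isolate row $h$ via the two subadditivity/evenness inequalities to obtain $|u_h-\|g[JB]\|_1|\le \|g[B]\|_1-u_h$, and then invoke the hypothesis $u_h\ge(1-\epsilon/2)\|g[B]\|_1$. The only difference is cosmetic bookkeeping at the end---the paper converts directly to $(1-\epsilon)u_h\le\|g[JB]\|_1\le(1+\epsilon)u_h$ using $\|g[B]\|_1\le u_h/(1-\epsilon/2)$, which avoids your detour through $\|g[B]\|_1\le\|g[JB]\|_1/(1-\epsilon)$ and its awkward (and not-quite-correct) $\epsilon>1/2$ caveat.
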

\begin{proof}
Denote $V = JB$. Without loss of generality assume $u_1$ is the row such that $u_1 \ge (1 - {\epsilon \over 2}) \|g[B]\|_1$.
By subadditivity of $g$:
$ \|g[V]\|_1 \le \|g[B]\|_1 \le {1 \over  1 - {\epsilon \over 2}} u_1 \le (1 +{\epsilon }) u_1$.
Further, we have $b_{1j} = \left(\sum_{i=1}^n b_{ij} - \sum_{i=2}^n b_{ij} \right)$.
Since $g$ is even and subadditive, and such functions are non-negative, we have
$g(b_{1j}) \le g\left(  \sum_{i=1}^n b_{ij} \right) +  \sum_{i=2}^n g( b_{ij} )$.
Rearranging and summing over $j$, we get:
$\sum_{j=1}^n g \left( \sum_{i=1}^n  b_{ij} \right) \ge \sum_{j=1}^n \left( g(b_{1,j})  - \sum_{i=2}^n g( b_{ij}) \right)$.

Therefore:
\begin{equation*}
\|g[V]\|_1 = \sum_{j=1}^n g \left( \sum_{i=1}^n  b_{ij} \right) \ge \sum_{j=1}^n \left( g(b_{1,j}) - \left( \sum_{i=2}^n g( b_{ij}) \right) \right) = u_1 - (\|g[B]\|_1 - u_1).
\end{equation*}

Finally:

$$ \|g[V]\|_1 \ge u_1 - (\|g[B]\|_1 - u_1)  = 2u_1 - \|g[B]\|_1\ge u_1(2 - {1 \over  1 - {\epsilon \over 2}}) = u_1  {1 - \epsilon  \over  1 - {\epsilon \over 2}}  \ge u_1 (1- \epsilon).$$

\end{proof}

\section{Algorithm for Finding Key Rows}

\begin{definition}[Algorithm for Finding Key Rows]\leavevmode\label{HRalgdef}

Input: Matrix $A$ and Sampling Identity Matrix $I_H$ generated from hash function $H$.

Output: Pair (a, b), where the following holds for $a,b$, and the matrix $W = I_H A$:
\begin{enumerate}
\item The pair is either $(a,b) = (-1,0)$ or $(a,b) = (i,\tilde{u}_{W,i})$.  Here, $\tilde{u}_{W,i}$ is
a $(1 \pm \epsilon)$-approximation to $u_{W,i}$ and the index $i$ is the correct corresponding row.
\item If there is a key row $i_0$ for the matrix $W$, then $a = i_0$.
\end{enumerate}
\end{definition}

Before describing the algorithm and proving its correctness, we prove the following useful lemma in
Appendix~\ref{lemma1proof}.
\begin{lemma}\label{lem:balanced}
Let $U = (u_1,\ldots,u_n)$ be a vector with non-negative entries of dimension $n$ and let $H'$ be a pairwise independent
hash function where $H':[n] \to \{0,1\}$ and $P[H'(i) = 1] = P[H'(i) = 0] = \frac{1}{2}$.   Denote
by $\bar{H'}$ the hash function defined by $\bar{H'}(i) = 1 - H'(i)$.  Let $X = \sum_i H'(i)u_i$ and
$Y = \sum_i \bar{H'}(i)u_i$.  If there is no $\frac{1}{16}$-heavy element with respect to $U$, then:
$$\Pr\left[\left(X \leq \frac{1}{4} \cdot \|U\|_1\right) \cup \left(Y \leq \frac{1}{4} \cdot \|U\|_1\right)\right] \leq \frac{1}{4}.$$
\end{lemma}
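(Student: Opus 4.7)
The plan is to recognize that $X + Y = \|U\|_1$ deterministically, so the two events $\{X \leq \tfrac{1}{4}\|U\|_1\}$ and $\{Y \leq \tfrac{1}{4}\|U\|_1\}$ are equivalent, respectively, to $\{X \geq \tfrac{3}{4}\|U\|_1\}$ and $\{X \leq \tfrac{1}{4}\|U\|_1\}$. Since $\mathbb{E}[X] = \tfrac{1}{2}\|U\|_1$ (because each $H'(i)$ is a fair coin), their union is exactly the deviation event $\{|X - \mathbb{E}[X]| \geq \tfrac{1}{4}\|U\|_1\}$. So it suffices to upper bound this single deviation probability.

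Next I would apply Chebyshev's inequality, which is available since $H'$ is only pairwise independent. By pairwise independence,
\[
\mathrm{Var}(X) \;=\; \sum_{i=1}^n u_i^2 \cdot \mathrm{Var}(H'(i)) \;=\; \tfrac{1}{4}\sum_{i=1}^n u_i^2,
\]
so Chebyshev gives
\[
\Pr\!\left[\,|X - \mathbb{E}[X]| \geq \tfrac{1}{4}\|U\|_1\,\right] \;\leq\; \frac{\tfrac{1}{4}\sum_i u_i^2}{\left(\tfrac{1}{4}\|U\|_1\right)^2} \;=\; \frac{4 \sum_i u_i^2}{\|U\|_1^2}.
\]

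The final step is to use the ``no $\tfrac{1}{16}$-heavy element'' hypothesis to control $\sum_i u_i^2$. Since $u_i \leq \tfrac{1}{16}\|U\|_1$ for every $i$, I bound
\[
\sum_i u_i^2 \;\leq\; \bigl(\max_i u_i\bigr)\sum_i u_i \;\leq\; \tfrac{1}{16}\|U\|_1 \cdot \|U\|_1 \;=\; \tfrac{1}{16}\|U\|_1^2,
\]
which plugged into the Chebyshev bound gives exactly $\tfrac{1}{4}$, finishing the proof.

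There is no real obstacle here: the entire argument is a two-line second-moment calculation. The only subtlety worth flagging in the write-up is that Chebyshev needs \emph{only} pairwise independence of $H'$ (which is what the hypothesis gives), and that the ``heavy element'' assumption is used precisely to convert the second-moment quantity $\sum u_i^2$ into an $\ell_1$-type quantity via the standard bound $\sum u_i^2 \leq \|U\|_\infty \cdot \|U\|_1$.
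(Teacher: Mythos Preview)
Your proof is correct and follows essentially the same route as the paper: compute $\mathbb{E}[X]=\tfrac12\|U\|_1$ and $\mathrm{Var}(X)=\tfrac14\sum_i u_i^2$ via pairwise independence, bound $\sum_i u_i^2\le \tfrac1{16}\|U\|_1^2$ using the no-$\tfrac1{16}$-heavy assumption, and finish with Chebyshev after rewriting the union of the two events as $\{|X-\mathbb{E}[X]|\ge \tfrac14\|U\|_1\}$ via $X+Y=\|U\|_1$.
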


\begin{theorem}{}\label{maintheorem}

If there exist two black box algorithms as specified in Black Box Algorithms~\ref{BA1def} and~\ref{BA2def},
then there exists an algorithm that satisfies the requirements in Definition~\ref{HRalgdef} with high probability.

\end{theorem}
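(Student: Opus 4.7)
The plan is to build a bisection search over the rows of $W = I_H A$, using Black Box Algorithm~\ref{BA2def} as the comparator to localize the key row (if any) and Black Box Algorithm~\ref{BA1def} to produce the final $(1\pm\epsilon)$-estimate of its weight. Initialize $\mathcal{S}_0 = \{i : H(i) = 1\}$. At round $t$, draw a fresh pairwise independent hash $H'_t : \mathcal{S}_{t-1} \to \{0,1\}$, splitting $\mathcal{S}_{t-1}$ into $\mathcal{S}_A$ and $\mathcal{S}_B$. Invoke BA2 on each of the sub-matrices $I_{\mathcal{S}_A} A$ and $I_{\mathcal{S}_B} A$, boosted to failure probability $O(1/\log n)$ by taking the median of $O(\log\log n)$ independent repetitions. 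If one of the two returned estimates exceeds the other by more than a threshold factor $T = \Theta(r^2(n)/\sqrt{\epsilon})$, recurse into the heavier side; otherwise terminate and return $(-1, 0)$.

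Next I would verify the key-row preservation property. If $W$ has a key row $i_0$, then by Definition~\ref{defn:key} it carries at least a $\rho(n,\epsilon)/(1+\rho(n,\epsilon)) = 1 - \epsilon/(r^4(n)+\epsilon)$ fraction of $\|g[W]\|_1$, so whichever side of the split contains $i_0$ outweighs the other by a true factor of at least $\rho(n,\epsilon) = r^4(n)/\epsilon$. Even in the worst case where BA2 distorts the heavier estimate downward by $r(n)$ and the lighter estimate upward by $r(n)$, the observed ratio remains at least $\rho/r^2(n) = r^2(n)/\epsilon > T$, so the algorithm recurses into $i_0$'s side. Iterating for $O(\log n)$ levels shrinks $\mathcal{S}_t$ down to the singleton $\{i_0\}$ (conditioned on every BA2 call succeeding).

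Once a singleton $\{i\}$ survives, I would call BA1 with the indicator hash $H'(\cdot) = \mathbf{1}[\cdot = i]$ and parameter $\epsilon' = \epsilon$. Then $I_{H'} A$ has only row $i$ nonzero, so $J I_{H'} A$ is exactly row $i$ of $A$, and BA1 returns a $(1\pm\epsilon)$-approximation of $\|g[J I_{H'} A]\|_1 = \sum_j g(a_{ij}) = u_{W,i}$, satisfying clause~1 of Definition~\ref{HRalgdef} for \emph{every} surviving row $i$, whether or not it is the key row. (An alternative route, which is the one that matches the logarithmic space factors appearing in Theorem~\ref{thm:main}, is to apply BA1 directly to the final recursion's restricted matrix and invoke Theorem~\ref{BmatrixProof}: since the surviving row dominates the remaining $g$-mass by the recursion's termination rule, $\|g[JW']\|_1$ already approximates $u_{W,i}$ to within $(1\pm\epsilon)$.) A union bound over the $O(\log n)$ BA2 decisions and the final BA1 call, each failing with probability $O(1/\log n)$, drives the total failure probability below any desired small constant.

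The main obstacle is showing that the algorithm behaves correctly when no key row exists: the bisection might still descend, and we must ensure whatever singleton we output has a valid approximation. Two facts handle this. First, because the final output uses a singleton-hash call to BA1, the returned $\tilde{u}_{W,i}$ is always a $(1\pm\epsilon)$-approximation of $u_{W,i}$, so clause~1 of Definition~\ref{HRalgdef} is automatic regardless of whether $i$ was a key row. Second, we must check that clause~2 is vacuous in this branch, i.e.\ that the existence of a key row would have forced the observed ratio above $T$ at every level with high probability; the preservation argument of the previous paragraph shows exactly this. Finally, Lemma~\ref{lem:balanced} supports the complementary intuition that the $(-1,0)$ output really does correspond to a ``no key row'' situation: in the absence of any $1/16$-heavy element the two sides are within a constant factor of each other with probability $\geq 3/4$, so the threshold-crossing event is suppressed as desired.
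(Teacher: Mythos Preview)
Your bisection scheme is a genuine alternative to the paper's construction, and at the level of abstraction of Theorem~\ref{maintheorem} it works: once you reach a singleton $\{i\}$ and invoke BA1 with the indicator hash, clause~1 of Definition~\ref{HRalgdef} is automatic, and your ratio argument (true gap $\ge \rho(n,\epsilon)$, observed gap $\ge \rho(n,\epsilon)/r^2(n) > T$) correctly forces the recursion to track a key row whenever one exists, giving clause~2. The use of Lemma~\ref{lem:balanced} at the end is cosmetic---you don't actually need the $(-1,0)$ branch to be ``correct'' in any sense beyond clause~2 being vacuous---but it does no harm.

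The paper's proof is structurally different. Rather than an adaptive descent, it runs $N=O(\log n)$ \emph{independent, parallel} random splits $H_\ell$ of the full row set, uses BA2 on each split to extract one bit $b_\ell$ recording which side is heavy, and then identifies the key row as the unique index $i$ whose hash signature $(H_\ell(i))_\ell$ matches $(b_\ell)_\ell$ on a large fraction of coordinates. The estimate is then obtained by a single call $BA1(A,H)$ on the \emph{original} matrix $W=I_HA$, with Theorem~\ref{BmatrixProof} translating the $(1\pm\epsilon')$-approximation of $\|g[JW]\|_1$ into a $(1\pm\epsilon)$-approximation of $u_{i_0}$.

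The distinction matters because the paper's endgame is a one-pass streaming algorithm (Lemma~\ref{HHAspace} and the final space bound). In that model every hash function fed to BA1/BA2 must be fixed before the stream arrives. Your bisection chooses the level-$(t{+}1)$ hash only after reading BA2's level-$t$ output, so it would need $O(\log n)$ passes over the data; likewise your singleton call $H'=\mathbf{1}[\cdot=i]$ cannot be set up until $i$ is known. The paper's signature-matching construction is precisely what lets all $O(\log n)$ BA2 instances and the single BA1 instance run in parallel over one pass. On the other hand, your approach makes the ``no key row'' analysis essentially trivial (any surviving singleton gets a correct estimate by construction), whereas the paper must work through a three-subcase argument (Lemmas~\ref{lem:case1}--\ref{lem:case3}) to rule out false positives in the signature match.
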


\begin{algorithm}[ht]
\label{KeyRowAlg} 
\caption{Algorithm Find-Key-Row}

The algorithm takes as input matrix $A$ and hash function $H:[n] \rightarrow \{0,1\}$

\begin{algorithmic}

\For{$\ell$ = 1 to $N = O(\log n)$}

\State Generate a pairwise independent, uniform hash function $H_\ell: [n] \to \{0,1\}$

\State Let $T_{1} = HAD(H,H_{\ell})$, $T_{0} = HAD(H,\bar{H}_{\ell})$

\State Let $y_1 = BA2(A,T_1)$, $y_0 = BA2(A,T_0)$ ($BA2$ is run with constant failure probability $\delta_2$)

\If{$y_0 \ge \tau(n,\epsilon) \cdot y_1$}

\State $b_\ell = 0$

\ElsIf{$y_1 \ge \tau(n,\epsilon) \cdot y_0$}

\State $b_\ell = 1$

\Else

\State $b_\ell = 2$

\EndIf
\EndFor

\If{$ |\{ \ell : b_\ell = 2 \}| \ge \frac{2}{3} n$}

\State Return $(-1,0)$

\Else

	\If{there is a row $i$ such that $i$ satisfies $|\{\ell : H_{\ell}(i) = b_\ell\}| \geq \frac{3}{4} \cdot N$ }

	\State Return $(i,BA1(A,H))$ ($BA1$ is run with $\epsilon' = \frac{\epsilon}{2}$ and $\delta_1$ is set to be inverse polylogarithmic)

	\Else 
	
	\State Return $(-1,0)$

	\EndIf

\EndIf

\end{algorithmic}
\label{alg:key}
\end{algorithm}

\begin{proof}

We will prove that Algorithm~\ref{alg:key} fits the description of Definition~\ref{HRalgdef}.  Using standard methods such as
in~\cite{Braverman:2010:ZFL:1806689.1806729}, we have a loop that runs in parallel $O(\log(n))$ times so that we can
find the index of a heavy element and return it, if there is one.  To prove this theorem, we consider the following three
exhaustive and disjoint cases regarding the matrix $g[I_H A]$ (recall that $H : [n] \rightarrow \{0,1\}$):
\begin{enumerate}
\item The matrix has a key row (note that a matrix always has at most one key row).

\item The matrix has no $\alpha$-heavy row for $\alpha = 1 - \frac{\epsilon}{8}$.

\item The matrix has an $\alpha$-heavy row for $\alpha = 1 - \frac{\epsilon}{8}$, but there is no key row.
\end{enumerate}

We prove that the algorithm is correct in each case in Lemmas~\ref{lem:case1},~\ref{lem:case2}, and~\ref{lem:case3}, respectively.
These proofs can be found in Appendix~\ref{KeyCases}.
\end{proof}

With the proof of these three cases, we are done proving that Algorithm~\ref{alg:key} performs correctly. We now analyze the space bound for Algorithm~\ref{alg:key}.

\begin{lemma}\label{HHAspace}
Algorithm~\ref{alg:key} uses $O\left(SPACE1(n,\delta_1,\frac{\epsilon}{2}) + \log(n)(\log^2(n) + SPACE2(n,\delta_2))\right)$
bits of memory, where $\delta_1$ is inverse polylogarithmic and $\delta_2$ is a constant.
\end{lemma}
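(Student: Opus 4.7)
My plan is to walk through Algorithm~\ref{alg:key} and account for every piece of state that must persist across the single pass over the data stream, since both $BA1$ and $BA2$ are streaming subroutines and the matrix $A$ can only be read once. Three categories of state show up: (i) the single running instance of $BA1(A,H)$ that produces the estimate returned in the ``found'' branch; (ii) the short-seed hash representations of $H$ and of the pairwise-independent hashes $H_1,\ldots,H_N$ drawn inside the loop, together with the decision bits $b_1,\ldots,b_N$; and (iii) the $2N$ simultaneous invocations of $BA2$, two per iteration, needed to produce $y_0$ and $y_1$ in one pass. The post-stream phase (scanning for a row $i$ with $|\{\ell : H_\ell(i)=b_\ell\}|\ge \tfrac{3}{4}N$) requires no additional streaming state, only a small counter.

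For the detailed accounting, I would first note that the $BA1$ instance contributes exactly $SPACE1(n,\delta_1,\epsilon/2)$ bits by definition of Black Box Algorithm~\ref{BA1def}, given that Algorithm~\ref{alg:key} invokes it with $\epsilon'=\epsilon/2$ and inverse polylogarithmic $\delta_1$ as prescribed. Next, each loop iteration $\ell \in \{1,\ldots,N\}$ with $N = O(\log n)$ stores: a pairwise-independent $H_\ell:[n]\to\{0,1\}$ represented by $O(\log n)$ seed bits via a standard two-wise hash family; two parallel $BA2$ instances on the sampled hashes $T_1 = HAD(H,H_\ell)$ and $T_0 = HAD(H,\bar{H}_\ell)$, each consuming $SPACE2(n,\delta_2)$ bits; the $O(\log(nm))$-bit estimates $y_0,y_1$ (which are already subsumed by the output size of $BA2$); and the constant-size marker $b_\ell \in \{0,1,2\}$. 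The composite hashes $T_0,T_1$ never need to be materialized or stored in full: whenever the inner $BA2$ queries its hash at coordinate $i$, the algorithm computes $H(i)\cdot H_\ell(i)$ (or $H(i)\cdot(1-H_\ell(i))$) on the fly from the stored seeds of $H$ and $H_\ell$, so the Hadamard operation adds no memory overhead.

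Summing these across the $N = O(\log n)$ iterations (which must run in parallel because of the one-pass restriction) gives $O(\log n)\cdot (O(\log n) + SPACE2(n,\delta_2))$ for the loop body, plus $SPACE1(n,\delta_1,\epsilon/2)$ for the simultaneous $BA1$ call, plus $O(\log n)$ bits for storing the vector $(b_1,\ldots,b_N)$ and the post-loop counter used when scanning rows. Folding the hash-seed and bookkeeping costs into the slightly loose $\log^2(n)$ slot of the stated bound (which absorbs, e.g., hashing $H$ itself, or any constant-factor boosting of $\delta_2$ inside $BA2$), the total is
\[
O\!\left(SPACE1(n,\delta_1,\tfrac{\epsilon}{2}) + \log(n)\bigl(\log^2(n) + SPACE2(n,\delta_2)\bigr)\right),
\]
which is exactly the claimed bound.

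I do not expect any genuine obstacle here; the content is essentially a line-by-line bookkeeping exercise. The one point that merits care is justifying the parallelism: because $A$ is implicit and streamed, the $BA1$ instance and all $2N$ copies of $BA2$ must coexist throughout the stream, and it is this parallelism that multiplies $SPACE2$ by $\log n$ while leaving $SPACE1$ unmultiplied (only one hash $H$ is used by $BA1$, not one per iteration). Everything else, including the final scan over $i\in[n]$ to locate the key row, is performed offline against the stored seeds and bits and costs only $O(\log n)$ additional space, which is absorbed.
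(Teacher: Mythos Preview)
Your proposal is correct and follows essentially the same approach as the paper's own proof: account for one instance of $BA1$, $O(\log n)$ parallel iterations each storing a pairwise-independent hash seed and two instances of $BA2$, and sum. Your write-up is more detailed (in particular, you explain why the iterations must run in parallel and why $BA1$ appears only once), and your per-hash seed cost of $O(\log n)$ is actually tighter than the paper's stated $O(\log^2 n)$ per $H_\ell$, but both fit comfortably inside the claimed bound.
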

\begin{proof}
Note that, in order for our algorithm to succeed, we run $BA1$ with an error parameter of
$\epsilon' = \frac{\epsilon}{2}$ and a failure probability
parameter $\delta_1$ which is inverse polylogarithmic.  Moreover, we run $BA2$ with a constant
failure probability.  We also require a number of random bits bounded by $O(\log^2(n))$ for generating each hash function $H_{\ell}$,
as well as the space required to run $BA2$ in each iteration of the loop.  Since there are $O(\log n)$ parallel iterations,
this gives the lemma.
\end{proof}

\subsection{Algorithm for Finding All $\alpha$-Heavy Rows}\label{anyHH}

Algorithm~\ref{alg:key} only guarantees that we return key rows.  Given a matrix $A$, we now show that this algorithm can be
used as a subroutine to find all $\alpha$-heavy rows $i$ with respect to the matrix $g[A]$ with high probability, along with
a $(1\pm \epsilon)$-approximation to the row weights $u_{g[A],i}$ for all $i$. In order to do this, we apply an additional
hash function $H:[n] \to [\tau]$ which essentially maps rows of the matrix
to some number of buckets $\tau$ (i.e., each bucket corresponds to a set of sampled rows based on $H$), and then run
Algorithm~\ref{alg:key} for each bucket.  The intuition for why the algorithm works is that any $\alpha$-heavy row $i$ in the original
matrix $A$ is likely to be a key row for the matrix in the corresponding bucket to which row $i$ is mapped.
Note that, eventually, we find $\alpha$-heavy rows for $\alpha = \frac{\epsilon^2}{\log^3 n}$.
The algorithm sets $\tau = O\left(\frac{\rho(n,\epsilon)\log(n)}{\alpha^2}\right)$ and is given below.

\begin{algorithm}[H]
\caption{Algorithm Find-Heavy-Rows}

This algorithm takes as input a matrix $A$ and a value $0 < \alpha < 1$.

\begin{algorithmic}

\State Generate a pairwise independent hash function $H: [n] \to [\tau]$, where $\tau = O\left(\frac{\rho(n,\epsilon)\log(n)}{\alpha^2}\right)$

\For{$k = 1$ to $\tau$}
\State Let $H_k: [n] \to \{0,1\}$ be the function defined by $H_k(i) = 1 \Longleftrightarrow H(i) = k$

\State Let $C_k = \textrm{Find-Key-Row}(A,H_k)$
\EndFor
\State Return $\{C_k : C_k \neq (-1,0) \}$
\end{algorithmic}
\label{alg:heavy}
\end{algorithm}

\begin{theorem}\label{keyheavythm}
Algorithm~\ref{alg:heavy} outputs a set of pairs $Q = \{(i_1,a_1),\ldots,(i_t,a_t)\}$ for $t \leq \tau$
which satisfy the following properties, except with probability $\frac{1}{\log n}$:

\begin{enumerate}
\item $\forall j\in [t]: (1-\epsilon)u_{g[A],i_j} \leq a_{j} \leq (1 \pm \epsilon)u_{g[A],i_j}$.
\item $\forall i\in [n]$: If row $i$ is $\alpha$-heavy with respect to the matrix $g[A]$, then $\exists j\in [t]$ such that $i_j = i$ (for any $0 < \alpha < 1$).
\end{enumerate}
\end{theorem}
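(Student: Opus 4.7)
The plan is to decompose the claim into two independent events: an approximation event (that whenever Find-Key-Row returns a non-trivial pair, the second coordinate is a $(1\pm\epsilon)$-approximation of the corresponding row weight of $g[A]$), and a completeness event (that every $\alpha$-heavy row of $g[A]$ is isolated in its bucket in a way that forces it to be a key row of the sampled sub-matrix, and hence gets reported). Correctness of item~1 will follow almost immediately from Definition~\ref{HRalgdef}: whenever a bucket $k$ causes Find-Key-Row to return $(i_j,a_j)$, the pair satisfies $H_k(i_j)=1$, so the $i_j$-th row of $W=I_{H_k}A$ agrees with the $i_j$-th row of $A$, and therefore $u_{g[W],i_j}=u_{g[A],i_j}$; Definition~\ref{HRalgdef} then gives $(1\pm\epsilon)u_{g[A],i_j}$. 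The substantive work is in item~2.

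For completeness, fix an $\alpha$-heavy row $i$ of $g[A]$ and let $k=H(i)$. Because the rows of $W = I_{H_k}A$ that contribute to $\|g[W]\|_1$ other than row $i$ are exactly those rows $i'\neq i$ with $H(i')=k$, the total non-$i$ mass in bucket $k$ is $S_i=\sum_{i'\neq i}\mathbf{1}[H(i')=k]\cdot u_{g[A],i'}$. Pairwise independence of $H$ gives $\mathbb{E}[S_i\mid H(i)=k] \le \|g[A]\|_1/\tau$, so by Markov's inequality,
\[
\Pr\!\left[S_i > \frac{u_{g[A],i}}{\rho(n,\epsilon)} \,\Big|\, H(i)=k\right] \le \frac{\rho(n,\epsilon)\|g[A]\|_1}{\tau\,u_{g[A],i}} \le \frac{\rho(n,\epsilon)}{\alpha\tau}.
\]
With $\tau=O(\rho(n,\epsilon)\log(n)/\alpha^2)$, this is $O(\alpha/\log n)$. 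On the complementary event, $u_{g[W],i}=u_{g[A],i} > \rho(n,\epsilon)\cdot S_i = \rho(n,\epsilon)\bigl(\|g[W]\|_1-u_{g[W],i}\bigr)$, so row $i$ is a key row of $W$, and (assuming Find-Key-Row succeeds on bucket $k$) it will be reported. Since there are at most $1/\alpha$ rows that are $\alpha$-heavy, a union bound over them bounds the isolation failure probability by $O(1/\log n)$.

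To fold in the algorithmic failure of Find-Key-Row, I would run it with its internal success parameter set so that each of the $\tau$ calls fails with probability at most $1/(\tau\log n)$, and then union-bound over all buckets. Combining this with the isolation analysis and the approximation guarantee handled in item~1 yields both conclusions simultaneously, except with total probability $O(1/\log n)$, which (up to tuning constants hidden in $\tau$) is the claimed bound.

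The main obstacle I expect is balancing the various failure budgets: the hashing step needs a good isolation probability for every heavy row, the bucket-level Find-Key-Row calls each bring their own failure probability, and the inner $BA1$/$BA2$ calls must succeed with probability good enough to survive union bounds over $\tau=O(\rho(n,\epsilon)\log(n)/\alpha^2)$ buckets. A secondary subtlety is ensuring that the ``other mass'' bound $S_i\le u_{g[A],i}/\rho(n,\epsilon)$ is actually what is needed to certify the key-row condition after sampling, i.e.\ checking that the definition of key row for $W=I_{H_k}A$ matches the quantity $S_i$ we control via Markov. Once these accounting choices are made consistently, the rest of the argument is bookkeeping.
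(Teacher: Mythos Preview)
Your proposal is correct and follows essentially the same approach as the paper: the paper likewise argues item~1 directly from the guarantee of Find-Key-Row (Definition~\ref{HRalgdef}), and for item~2 uses exactly your Markov bound $\Pr[S_i > u_{g[A],i}/\rho(n,\epsilon)\mid H(i)=k]\le \rho(n,\epsilon)\|g[A]\|_1/(\tau\,u_{g[A],i})\le \alpha/(4\log n)$ followed by a union bound over the at most $1/\alpha$ heavy rows, together with a union bound over the $\tau$ calls to Find-Key-Row. The only cosmetic difference is that the paper additionally includes a Birthday-paradox bound on collisions among heavy rows, which is in fact subsumed by the Markov argument you give (a collision simply makes $S_i$ large for one of the colliding heavy rows), so your omission of it is harmless.
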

\begin{proof}
First, the number of pairs output by Algorithm~\ref{alg:heavy} is at most the number of buckets, which equals $\tau$.
Now, the first property is true due to the fact that Algorithm~\ref{alg:key} has a high success probability.  In particular,
as long as the failure probability is at most $\frac{1}{\tau \cdot \log^c(n)}$ for some constant $c$ (which we ensure), then by union bound
the probability that there exists a pair $(i_j,a_j) \in Q$ such that $a_j$ is not a $(1\pm\epsilon)$-approximation to
$u_{g[A],i_j}$ is at most inverse polylogarithmic.

Now, to ensure the second item, we need to argue that every $\alpha$-heavy row gets mapped to its own bucket with high probability, since if there is
a collision the algorithm cannot find all $\alpha$-heavy rows.  Moreover, we must argue that
for each $\alpha$-heavy row $i$ with respect to the matrix $g[A]$, if $i$ is mapped to bucket $k$ by $H$, then row $i$ is actually a key row in the
corresponding sampled matrix $g[A_k]$ (for ease of notation, we write $A_k$ to denote the matrix $H_kA_k$).
More formally, suppose row $i$ is $\alpha$-heavy.  Then the algorithm must guarantee
with high probability that, if $H(i) = k$, then row $i$ is a key row in the matrix $g[A_k]$.  If we prove these two properties,
then the theorem holds (since Algorithm~\ref{alg:key} outputs a key row with high probability, if there is one).

Observe that there must be at most $\frac{1}{\alpha}$ rows which are $\alpha$-heavy.  In particular, let $R$ be the set of $\alpha$
heavy rows, and assume towards a contradiction that $|R| > \frac{1}{\alpha}$.  Then we have:
$$ \|g[A]\|_1 \geq \sum_{i \in R}u_{g[A],i} \geq \sum_{i \in R}\alpha \|g[A]\|_1 = \alpha \cdot \|g[A]\|_1 \cdot |R| > \|g[A]\|_1,$$
which is a contradiction.  Hence, we seek to upper bound the probability of a collision when throwing $\frac{1}{\alpha}$ balls into
$\tau$ bins.  By a Birthday paradox argument, this happens with probability at most $\frac{1}{2\cdot\tau\cdot\alpha^2}$, which
can be upper bounded as follows:
$$ \frac{1}{2 \tau \alpha^2} \leq \frac{\alpha^2}{2\alpha^2 \rho(n,\epsilon) \log(n)} = \frac{1}{2\rho(n,\epsilon)\log(n)} = \frac{\epsilon}{2r^4(n)\log(n)},$$
which is inverse polylogarithmically small.

Now, we argue that every $\alpha$-heavy row $i$ for the matrix $g[A]$ is mapped to a sampled matrix such that $i$ is a key row in the
sampled matrix with high probability.  In particular, suppose $H(i) = k$, implying that row $i$ is mapped to bucket $k$.
For $\ell \neq i$, let $X_{\ell}$ be the indicator random variable which is $1$ if and only if row $\ell$ is mapped to the same bucket as $i$,
namely $H(\ell) = k$ (i.e., $X_\ell = 1$ means the sampled matrix $g[A_k]$ contains row $i$ and row $\ell$).
If row $i$ is not a key row for the matrix $g[A_k]$, this means that $u_{g[A_k],i} \leq \rho(n,\epsilon)(\|g[A_k]\|_1 - u_{g[A_k],i})$.
Observe that, if row $i$ is mapped to bucket $k$, then we have $u_{g[A_k],i} = u_{g[A],i}$.  Hence, the
the probability that row $i$ is not a key row for the sampled matrix $g[A_k]$ (assuming row $i$ is mapped to bucket $k$)
can be expressed as $\Pr[u_{g[A],i} \leq \rho(n,\epsilon)(\|g[A_k]\|_1 - u_{g[A],i}) | H(i) = k]$.  By pairwise independence
of $H$, and by Markov's inequality, we can write:
\begin{align*}
\Pr\Big[u_{g[A],i} &\leq \rho(n,\epsilon)(\|g[A_k]\|_1 - u_{g[A],i}) \ \Big| \ H(i) = k\Big] \\
&= \Pr\left[u_{g[A],i} \leq \rho(n,\epsilon)\sum_{\ell \neq i}u_{g[A],\ell}X_\ell \ \middle| \ H(i) = k\right] =
\Pr\left[u_{g[A],i} \leq \rho(n,\epsilon)\sum_{\ell \neq i}u_{g[A],\ell}X_\ell \right] \\
&= \Pr\left[\sum_{\ell \neq i}u_{g[A],\ell}X_\ell \geq \frac{u_{g[A],i}}{\rho(n,\epsilon)}\right] \leq
\frac{\rho(n,\epsilon)\mathbb{E}\left[\sum_{\ell \neq i}u_{g[A],\ell}X_\ell\right]}{u_{g[A],i}} = \frac{\rho(n,\epsilon)\sum_{\ell \neq i}u_{g[A],\ell}}{\tau \cdot u_{g[A],i}}\\
&\leq \frac{\rho(n,\epsilon) \|g[A]\|_1}{\alpha \tau \|g[A]\|_1} = \frac{\alpha^2 \rho(n,\epsilon)}{4 \alpha \cdot \rho(n,\epsilon) \log(n)} \leq \frac{\alpha}{4\log(n)}.
\end{align*}
Here, we choose $\tau = \frac{4\rho(n,\epsilon)\log(n)}{\alpha^2}$, and get that the probability that a particular $\alpha$-heavy row $i$ is not a key
row in its corresponding sampled matrix is at most $\frac{\alpha}{4\log(n)}$.  Since there are at most $\frac{1}{\alpha}$ rows which are $\alpha$-heavy,
by union bound the probability that there exists an $\alpha$-heavy row that is not a key row in its sampled matrix is at most $\frac{1}{4\log(n)}$.

Thus, in all, the probability that at least one bad event happens (i.e., there exists a pair $(i_j,a_j)$ such that $a_j$ is not a good approximation to
$u_{g[A],i_j}$, there is a collision between $\alpha$-heavy rows, or an $\alpha$-heavy row is not a key row in its corresponding sampled matrix)
is at most $\frac{1}{\log(n)}$.  This gives the theorem.
\end{proof}

\subsection{Sum from $\alpha$-Heavy Rows}\label{sec:sumfromheavy}
We now have an algorithm that is able to find all $\alpha$-heavy rows for $\alpha = \frac{\epsilon^2}{\log^3 n}$, except with
probability $\frac{1}{\log n}$.  In the language of~\cite{recursive1}, by Theorem~\ref{keyheavythm}, our $\alpha$-heavy rows algorithm outputs
an $(\alpha,\epsilon)$-cover with respect to the vector $(u_{g[A],1},u_{g[A],2},\ldots,u_{g[A],n})$ except with probability $\frac{1}{\log n}$,
where $\epsilon > 0$ and $\alpha > 0$.  Hence, we can apply the Recursive Sum algorithm from~\cite{recursive1} (see Appendix~\ref{recursive}
for the formal definition of an $(\alpha,\epsilon)$-cover, along with the Recursive Sum algorithm) to get a $(1 \pm \epsilon)$-approximation
of $\|g[A]\|_1$.  Note that the Recursive Sum algorithm needs $\alpha = \frac{\epsilon^2}{\log^3 n}$ and a failure probability of at most
$\frac{1}{\log n}$, which we provide.  Hence, we get the following theorem.
\begin{theorem}\label{recthem}
The Recursive Sum Algorithm, using Algorithm \ref{alg:heavy} as a subroutine, returns a $(1 \pm \epsilon)$-approximation of $\|g[A]\|_1$.  
\end{theorem}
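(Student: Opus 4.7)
The plan is to treat this as a black-box invocation of the Recursive Sum meta-theorem of \cite{recursive1}, so the core of the proof reduces to verifying that the hypotheses of that theorem are satisfied by Algorithm~\ref{alg:heavy}.

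First, I would observe that by Definition~\ref{weightofrow} we have $\|g[A]\|_1 = \sum_{i=1}^n u_{g[A],i}$, so approximating $\|g[A]\|_1$ is equivalent to approximating $\|U\|_1$ for the non-negative vector $U = (u_{g[A],1},\ldots,u_{g[A],n})$. Non-negativity follows from the assumption that $g$ is non-negative, and this reframing is what lets the Recursive Sum machinery apply: that machinery only needs a subroutine that, given a vector whose coordinates are accessed implicitly, produces an $(\alpha,\epsilon)$-cover of the vector.

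Second, I would check that Algorithm~\ref{alg:heavy} supplies such a cover with the right parameters. By Theorem~\ref{keyheavythm}, instantiated with $\alpha = \frac{\epsilon^2}{\log^3 n}$, the algorithm outputs a set of pairs $\{(i_j,a_j)\}$ that (i) estimates each reported row weight to within a $(1 \pm \epsilon)$-factor of $u_{g[A],i_j}$ and (ii) contains every $\alpha$-heavy coordinate of $U$, except with probability at most $\frac{1}{\log n}$. These are exactly the two requirements for an $(\alpha,\epsilon)$-cover in the sense of \cite{recursive1}, and both the threshold $\alpha$ and the failure probability $\frac{1}{\log n}$ match what the Recursive Sum meta-theorem demands as input (see the formal statement reproduced in Appendix~\ref{recursive}).

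Third, I would invoke the Recursive Sum theorem of \cite{recursive1} directly: given any $(\alpha,\epsilon)$-cover subroutine with $\alpha = \frac{\epsilon^2}{\log^3 n}$ and failure probability at most $\frac{1}{\log n}$, that algorithm outputs a $(1 \pm \epsilon)$-approximation of the $L_1$ norm of the underlying non-negative vector with constant success probability. Applied to the vector $U$ defined above, this yields a $(1 \pm \epsilon)$-approximation of $\|g[A]\|_1$, which is exactly the conclusion of Theorem~\ref{recthem}.

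The only real obstacle is parameter bookkeeping: one must confirm that the precise pair $(\alpha,\delta)$ delivered by Algorithm~\ref{alg:heavy} matches what the Recursive Sum theorem consumes (there is no slack to absorb a mismatch), and that $U$ genuinely is non-negative so that the $L_1$ norm of $U$ coincides with the unsigned sum $\sum_i u_{g[A],i}$. Once these observations are in place, no further calculation is required and the theorem follows from the cited meta-result.
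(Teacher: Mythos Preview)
Your proposal is correct and matches the paper's own argument essentially verbatim: the paper also proves this theorem by noting that Theorem~\ref{keyheavythm} certifies Algorithm~\ref{alg:heavy} outputs an $(\alpha,\epsilon)$-cover of the row-weight vector $(u_{g[A],1},\ldots,u_{g[A],n})$ with $\alpha=\epsilon^2/\log^3 n$ and failure probability $1/\log n$, and then invokes the Recursive Sum meta-theorem of~\cite{recursive1} as a black box. Your additional remark that non-negativity of $g$ is what makes $\|U\|_1=\sum_i u_{g[A],i}$ is a small clarification the paper leaves implicit, but otherwise there is no difference in approach.
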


\subsection{Space Bounds}

\begin{lemma}
Recursive Sum, using Algorithm~\ref{alg:heavy} as a subroutine as described in Section~\ref{sec:sumfromheavy}, uses the following
amount of memory, where $\epsilon' = \frac{\epsilon}{2}$, $\delta_1$ is inverse polylogarithmic, and $\delta_2$ is a small constant:
$$O\left( {1 \over \epsilon^5} r^4(n)\log^{10}(n) + {1 \over \epsilon^5} r^4(n)\log^8(n)SPACE1(n,\delta_1,\epsilon') +  {1 \over \epsilon^5} r^4(n)\log^9(n)SPACE 2(n,\delta_2)\right).$$
\end{lemma}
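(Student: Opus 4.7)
The plan is to compose three layers of space accounting: starting from the per-iteration cost of Find-Key-Row (already bounded in Lemma~\ref{HHAspace}), blowing it up by the number of buckets used in Find-Heavy-Rows, and then accounting for the overhead imposed by the Recursive Sum driver from \cite{recursive1}. I would begin by restating the Find-Key-Row bound, namely $O\bigl(SPACE1(n,\delta_1,\epsilon/2) + \log(n)(\log^2(n) + SPACE2(n,\delta_2))\bigr)$, and note that in a streaming setting every ``parallel'' instance of Find-Key-Row inside Find-Heavy-Rows must physically maintain its own state, because the stream is read once.

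Next I would plug in the parameters dictated by Theorem~\ref{keyheavythm} and the discussion in Section~\ref{sec:sumfromheavy}. Find-Heavy-Rows must be run with $\alpha = \epsilon^2/\log^3 n$ in order to produce an $(\alpha,\epsilon)$-cover suitable for Recursive Sum, which forces $\tau = O(\rho(n,\epsilon)\log(n)/\alpha^2)$. Using $\rho(n,\epsilon) = r^4(n)/\epsilon$ this simplifies to $\tau = O(r^4(n)\log^7(n)/\epsilon^5)$. Multiplying the Find-Key-Row bound by $\tau$ (and absorbing the negligible $O(\log^2 n)$ cost of storing the outer bucket hash $H:[n]\to[\tau]$) yields three terms: one proportional to $\tau\cdot SPACE1$, one proportional to $\tau\cdot \log^3(n)$ from random hash bits, and one proportional to $\tau\cdot \log(n)\cdot SPACE2$.

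Finally I would wrap Find-Heavy-Rows inside the Recursive Sum framework from \cite{recursive1}. That algorithm applies the cover routine at $O(\log n)$ recursive levels (and boosts its failure probability to $1/\log n$, which is what Theorem~\ref{keyheavythm} already supplies), and the state of each level must be kept simultaneously. This multiplies the $SPACE1$ and $SPACE2$ contributions by an extra $\log n$, producing the claimed $r^4(n)\log^8(n)/\epsilon^5$ coefficient on $SPACE1$ and $r^4(n)\log^9(n)/\epsilon^5$ coefficient on $SPACE2$. For the pure ``random bits'' term, I would invoke the standard observation that the seeds defining the inner pairwise hash families can be shared across the Recursive Sum levels (they are only used to mark sampled rows and are independent of the sum being estimated), so that term only incurs the $\tau \cdot \log^3(n) = r^4(n)\log^{10}(n)/\epsilon^5$ cost without an additional $\log n$ factor.

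The main obstacle I expect is keeping the log bookkeeping honest at every compositional step: correctly attributing the $\log n$ from the inner loop of Find-Key-Row, the $\log^7 n$ factor buried in $\tau$ via $\alpha = \epsilon^2/\log^3 n$, and the $\log n$ from Recursive Sum, while arguing that the randomness for hash functions does not need to be duplicated across Recursive Sum levels. Other than this careful accounting, each step is a direct multiplication of space usages of independent parallel subroutines, so once the log factors line up the stated bound follows.
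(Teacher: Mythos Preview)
Your proposal is correct and follows essentially the same approach as the paper: take the Find-Key-Row bound from Lemma~\ref{HHAspace}, multiply by the number of buckets $\tau = O(r^4(n)\log^7(n)/\epsilon^5)$ coming from $\alpha = \epsilon^2/\log^3 n$ and $\rho(n,\epsilon)=r^4(n)/\epsilon$, and then multiply by the $O(\log n)$ levels of Recursive Sum. The one place you go beyond the paper is your argument that the inner hash seeds can be shared across Recursive Sum levels so that the pure random-bits term stays at $\log^{10}(n)$ rather than $\log^{11}(n)$; the paper's proof simply multiplies everything by $\tau\cdot\log n$ without isolating this term, so your accounting is, if anything, more careful---though the seed-sharing claim deserves a sentence explaining why correctness at each level is unaffected.
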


\begin{proof}
The final algorithm uses the space bound from Lemma \ref{HHAspace}, multiplied by
$\tau = O\left(\frac{\rho(n,\epsilon)\log(n)}{\alpha^2}\right)$, where $\alpha = {\epsilon^2 \over \phi^3}$,
$\phi = O(\log n)$, and $\rho(n,\epsilon) =  {r^4(n) \over \epsilon}$. This gives $\tau = {1 \over \epsilon^5}r^4(n)\log^7(n)$
to account for the splitting required to find $\alpha$-heavy rows in Section \ref{anyHH}. Finally, a multiplicative
cost of  $\log(n)$ is needed for Recursive Sum, giving the final bound.
\end{proof}

\section{Applications}

We now apply our algorithm to the problem of determining the $L_1$ distance between joint and product distributions as described in Problem \ref{motivating}.

\subsection*{Space Bounds for Determining $L_1$ Independence}

Given an $n \times n$ matrix $A$ with entries $a_{ij} = g\left( {f_{ij} \over m}  - {f_{i}f_{j} \over m}  \right)$, we have
provided a method to approximate:
$$  \sum\limits_{i=1}^n \sum\limits_{j = 1}^n g\left( {f_{ij} \over m}  - {f_{i}f_{j} \over m}  \right).$$

Let $g$ be the $L_1$ distance, namely $g(x) = |x|$.  We now state explicitly which blackbox algorithms we use:

\begin{itemize}
\item Let Black Box Algorithm 1 ($BA1$) be the $(1 \pm \epsilon)$-approximation of $L_1$  for vectors from \cite{stable}.
The space of this algorithm is upper bounded by the number of random bits required and uses
$O(\log({nm \over \delta \epsilon}) \log({ m \over \delta \epsilon} ) \log ( {1 \over \delta} ) \epsilon^{-2}  )$ bits of memory.

\item Let Black Box Algorithm 2 ($BA2$) be the $r(n)$-approximation, using the $L_1$ sketch of the distance
between joint and product distributions from \cite{IM08}. This algorithm does not have a precise polylogarithmic
bound provided, but we compute that it is upper bounded by the random bits required to generate the Cauchy random variables similarly to $BA1$.
This algorithm requires $O(\log({nm \over \delta \epsilon}) \log({ m \over \delta \epsilon} ) \log ( {1 \over \delta} ) \epsilon^{-2} )$
bits of memory.

\end{itemize}
These two algorithms match the definitions given in Section \ref{SFandBA}, thus we are able to give a bound
of $O({ 1 \over \epsilon^7}\log^{14}(n) \log^{2}({nm \over \epsilon}))$ on the space our algorithm requires.
We can improve this slightly as follows.
\begin{corollary}Due to the nature of the truncated Cauchy distribution (see 
\cite{IM08}), we can further improve our space bound to $O\left({ 1 \over \epsilon^7}\log^{12}(n) \log^{2}({nm \over \epsilon})\right)$.

\begin{proof}

Due to the constant lower bound on the approximation of $L_1$, instead of ${1 \over r^2(n)} \le \|g[W]\|_1 \le r^2(n)$, we get $C \le \|g[W]\|_1 \le log^2(n)$
for some constant $C$.
As the space cost from dividing the matrix into submatrices as shown in Section \ref{anyHH} directly depends on these bounds, we only pay an $O(r^2(n))$
multiplicative factor instead of an $O(r^4(n))$ multiplicative factor and achieve a bound of $O\left({ 1 \over \epsilon^7}\log^{12}(n) \log^{2}({nm \over \epsilon})\right)$.
\end{proof}

\end{corollary}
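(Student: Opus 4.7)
The plan is to trace precisely where the $r^4(n)$ factor enters the space bound in the lemma preceding this corollary, and to argue that one of the two $r^2(n)$ pieces collapses to a constant when $BA2$ is instantiated by the truncated Cauchy sketch from \cite{IM08}. Recall that $\rho(n,\epsilon) = r^4(n)/\epsilon$ from the Threshold Functions definition, and the blow-up $\tau = O(\rho(n,\epsilon)\log(n)/\alpha^2)$ of Algorithm~\ref{alg:heavy} is the sole source of the $r^4(n)$ multiplicative cost in the final space bound (with $r(n) = \log(n)$ this contributes $\log^{8}(n)$ to what becomes $\log^{14}(n)$ after multiplying by the other logarithmic factors from Lemma~\ref{HHAspace} and Recursive Sum). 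So the question reduces to: why does $\rho$ carry a factor of $r^4(n)$ rather than $r^2(n)$?

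First I would revisit the derivation of $\rho$ and observe that the quartic arises from treating the $r(n)$-approximation of $BA2$ as two-sided, namely from upper bounding $\|g[W]\|_1$ by $r^2(n)$ times a normalized quantity and simultaneously lower bounding it by $1/r^2(n)$. The lower bound is what forces a heavy row to remain distinguishable after the hash sampling: we must be sure that if row $i$ is genuinely heavy then its weight never drops below a $1/r^2(n)$ fraction of the total sketch mass returned by $BA2$. This $1/r^2(n)$ is the conservative bound implied by a generic multiplicative approximation, and it is precisely what the truncated Cauchy estimator lets us replace by an absolute constant.

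Next I would invoke the structural property of the $p$-stable (Cauchy) sketch with truncation as used in \cite{IM08}: the truncated median-of-absolute-values estimator is always at least a universal constant $C$ on any nonzero input (its purpose was to kill the heavy tail while preserving an $O(\log n)$ upper bound on the estimate). Thus, when $BA2$ is instantiated with this sketch, we get the sharpened range $C \le \|g[W]\|_1 \le \log^2(n)$ in place of $1/r^2(n) \le \|g[W]\|_1 \le r^2(n)$. Substituting this sharpened range back into the derivation of $\rho$, the contribution from the lower bound becomes $O(1)$ and the effective threshold shrinks from $r^4(n)/\epsilon$ to $r^2(n)/\epsilon$. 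Carrying this through $\tau$, Lemma~\ref{HHAspace}, and the Recursive Sum multiplier gives exactly the claimed improvement, converting $\log^{14}(n)$ into $\log^{12}(n)$ while leaving every other factor intact.

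The main obstacle I expect is the verification step in the previous paragraph: one must confirm that the constant lower bound of the truncated Cauchy estimator holds with high probability on the actual matrices produced by the sampling and hashing operations of Algorithms~\ref{alg:key} and~\ref{alg:heavy}, not merely in isolation on a fixed input. In particular, the truncation parameters from \cite{IM08} must be set so that, on every sampled sub-matrix $I_H A$ examined by $BA2$, the estimator still obeys the $\Omega(1)$ lower bound except with probability $\delta_2$, and this must compose cleanly with the union bounds already used in Theorem~\ref{keyheavythm}. Once this compatibility is checked, the rest of the proof is a purely mechanical substitution of the improved bound into the existing space accounting.
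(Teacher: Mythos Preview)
Your proposal is correct and follows essentially the same line as the paper's own proof: both identify that the $r^4(n)$ factor in $\rho(n,\epsilon)$ arises from combining the two-sided approximation guarantee $1/r^2(n) \le \|g[W]\|_1 \le r^2(n)$, observe that the truncated Cauchy sketch of \cite{IM08} replaces the lower side by an absolute constant $C$, and conclude that only an $r^2(n)$ multiplicative factor survives in $\tau$, shaving $\log^2(n)$ off the final bound. Your write-up is in fact more explicit than the paper's (which is a two-sentence sketch) and even flags a verification step---that the constant lower bound must hold on every hashed sub-matrix $I_H A$---that the paper leaves implicit.
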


\newpage

\bibliography{Bibliography}

\begin{thebibliography}{10}

\bibitem{alon2007testing}
Noga Alon, Alexandr Andoni, Tali Kaufman, Kevin Matulef, Ronitt Rubinfeld, and
  Ning Xie.
\newblock Testing k-wise and almost k-wise independence.
\newblock In {\em Proceedings of the 39th annual ACM Symposium on Theory of
  Computing}, 2007.

\bibitem{alon2003almost}
Noga Alon, Oded Goldreich, and Yishay Mansour.
\newblock Almost k-wise independence versus k-wise independence.
\newblock {\em Information Processing Letters}, 88(3):107--110, 2003.

\bibitem{ams}
Noga Alon, Yossi Matias, and Mario Szegedy.
\newblock The space complexity of approximating the frequency moments.
\newblock In {\em Proceedings of the 28th annual ACM Symposium on Theory of
  Computing}, 1996.

\bibitem{BatuTesting2001}
Tu{\u{g}}kan Batu, Eldar Fischer, Lance Fortnow, Ravi Kumar, Ronitt Rubinfeld,
  and Patrick White.
\newblock Testing random variables for independence and identity.
\newblock In {\em Proceedings of the 42nd annual IEEE Symposium on Foundations
  of Computer Science}, 2001.

\bibitem{BatuIndependence}
Tu{\u{g}}kan Batu, Lance Fortnow, Ronitt Rubinfeld, Warren~D. Smith, and
  Patrick White.
\newblock Testing that distributions are close.
\newblock In {\em Proceedings of the 41st annual IEEE Symposium on Foundations
  of Computer Science}, 2000.

\bibitem{batu2013testing}
Tu{\u{g}}kan Batu, Lance Fortnow, Ronitt Rubinfeld, Warren~D. Smith, and
  Patrick White.
\newblock Testing closeness of discrete distributions.
\newblock {\em Journal of the ACM}, 60(1):4, 2013.

\bibitem{batu2004sublinear}
Tu{\u{g}}kan Batu, Ravi Kumar, and Ronitt Rubinfeld.
\newblock Sublinear algorithms for testing monotone and unimodal distributions.
\newblock In {\em Proceedings of the 36th annual ACM Symposium on Theory of
  Computing}, 2004.

\bibitem{braverman2010ams4}
Vladimir Braverman, Kai-Min Chung, Zhenming Liu, Michael Mitzenmacher, and
  Rafail Ostrovsky.
\newblock {AMS Without 4-Wise Independence on Product Domains}.
\newblock In {\em {Proceedings of the 27th International Symposium on
  Theoretical Aspects of Computer Science}}, 2010.

\bibitem{BO10}
Vladimir Braverman and Rafail Ostrovsky.
\newblock Measuring independence of datasets.
\newblock In {\em Proceedings of the 42nd annual ACM Symposium on Theory of
  Computing}, 2010.

\bibitem{Braverman:2010:ZFL:1806689.1806729}
Vladimir Braverman and Rafail Ostrovsky.
\newblock Zero-one frequency laws.
\newblock In {\em Proceedings of the 42nd ACM Symposium on Theory of
  Computing}, 2010.

\bibitem{recursive1}
Vladimir Braverman and Rafail Ostrovsky.
\newblock Generalizing the layering method of {I}ndyk and {W}oodruff: Recursive
  sketches for frequency-based vectors on streams.
\newblock In {\em International Workshop on Approximation Algorithms for
  Combinatorial Optimization Problems}. Springer, 2013.

\bibitem{guillot2015complete}
Dominique Guillot, Apoorva Khare, and Bala Rajaratnam.
\newblock Complete characterization of hadamard powers preserving loewner
  positivity, monotonicity, and convexity.
\newblock {\em Journal of Mathematical Analysis and Applications},
  425(1):489--507, 2015.

\bibitem{horntopics}
Roger~A. Horn and Charles~R. Johnson.
\newblock Topics in matrix analysis.
\newblock {\em Cambridge University Presss, Cambridge}, 1991.

\bibitem{stable}
Piotr Indyk.
\newblock Stable distributions, pseudorandom generators, embeddings, and data
  stream computation.
\newblock {\em Journal of the ACM}, 53(3):307--323, 2006.

\bibitem{IM08}
Piotr Indyk and Andrew McGregor.
\newblock Declaring independence via the sketching of sketches.
\newblock In {\em Proceedings of the 19th annual ACM-SIAM Symposium on Discrete
  Algorithms}, 2008.

\bibitem{kimball2004data}
Ralph Kimball and Joe Caserta.
\newblock The data warehouse etl toolkit: Practical techniques for extracting,
  cleaning, conforming, and delivering data.
\newblock 2004.

\bibitem{lehmann2006testing}
Erich~L. Lehmann and Joseph~P. Romano.
\newblock {\em Testing statistical hypotheses}.
\newblock Springer Science \& Business Media, 2006.

\bibitem{mcgregorevaluating}
Andrew McGregor and Hoa~T. Vu.
\newblock Evaluating bayesian networks via data streams.
\newblock In {\em Proceedings of the 21st International Computing and
  Combinatorics Conference}, 2015.

\bibitem{poosala1997selectivity}
Viswanath Poosala and Yannis~E. Ioannidis.
\newblock Selectivity estimation without the attribute value independence
  assumption.
\newblock In {\em Proceedings of the 23rd International Conference on Very
  Large Data Bases}, 1997.

\bibitem{Rubinfeld05testingmonotone}
Ronitt Rubinfeld and Rocco~A. Servedio.
\newblock Testing monotone high-dimensional distributions.
\newblock In {\em Proceedings of the 37th annual ACM Symposium on Theory of
  Computing}, 2005.

\bibitem{Sahai2003Complete}
Amit Sahai and Salil Vadhan.
\newblock A complete problem for statistical zero knowledge.
\newblock {\em Journal of the ACM}, 50(2):196--249, 2003.

\bibitem{W14}
David~P. Woodruff.
\newblock Sketching as a tool for numerical linear algebra.
\newblock {\em CoRR}, abs/1411.4357, 2014.

\end{thebibliography}

\appendix

\section{Proof of of Lemma 1}\label{lemma1proof}

\begin{proof}
Note that we always have the equality $X+Y = \sum_i H'(i)u_i + \bar{H'}(i)u_i = \sum_i H'(i)u_i + (1-H'(i))u_i = \|U\|_1$,
and moreover $\mathbb{E}[X] = \sum_i u_i \mathbb{E}[H'(i)] = \frac{1}{2} \cdot \|U\|_1$.  Also, observe that
\begin{align*}
Var[X] = \mathbb{E}[X^2] - (\mathbb{E}[X])^2 &= \sum_i \mathbb{E}[(H'(i))^2]u^2_i + \sum_{i \neq j}\mathbb{E}[H'(i)H'(j)]u_iu_j - \frac{1}{4} \cdot \|U\|^2_1 \\
&= \frac{1}{2}\sum_i u^2_i + \frac{1}{4}\sum_{i \neq j}u_i u_j - \frac{1}{4}\left(\sum_i u^2_i + \sum_{i \neq j}u_i u_j\right) = \frac{1}{4}\sum_iu^2_i.
\end{align*}
Using the fact that there is no $\frac{1}{16}$-heavy element with respect to $U$, which implies that
$u_i \leq \frac{1}{16}\cdot\|U\|_1$ for all $i$, we have:
$$Var[X] = \frac{1}{4}\sum_i u^2_i \leq \frac{\|U\|_1}{64} \sum_iu_i = \frac{\|U\|^2_1}{64}.$$
Now we can apply Chebyshev's inequality to obtain:
\begin{align*}
\Pr\left[\left(X \leq \frac{1}{4} \cdot \|U\|_1\right) \cup \left(Y \leq \frac{1}{4} \cdot \|U\|_1\right)\right] &= \Pr\left[|X - \mathbb{E}[X]| \geq \frac{\|U\|_1}{4}\right]\\
&\leq \frac{16\cdot Var[X]}{\|U\|^2_1} \leq \frac{16 \cdot \|U\|^2_1}{64\cdot \|U\|^2_1} = \frac{1}{4}.
\end{align*}
\end{proof}

\section{Proof of Correctness of Algorithm 1}\label{KeyCases}

Throughout the lemmas, we imagine that the hash function $H:[n] \rightarrow \{0,1\}$ is fixed, and hence the matrix $g[I_H A]$
is fixed.  All randomness is taken over the pairwise independent hash functions $H_\ell$ that are generated
in parallel, along with both blackbox algorithms.

To ease the notation, we define
$$W = I_H A \textrm{, } W_1 = I_{T_1}A \textrm{, and } W_0 = I_{T_2}A$$
(recall the notation from Algorithm~\ref{alg:key} that $T_{1} = HAD(H,H_{\ell})$ and $T_{0} = HAD(H,\bar{H}_{\ell})$).
Finally, for each row
$i$ in the matrix $g[W]$, we define the shorthand notation $u_i = u_{g[W],i}$.

\begin{lemma}\label{lem:case1}
If the matrix $g[I_H A]$ has a key row, Algorithm~\ref{alg:key} correctly returns the index of the row and a
$(1 \pm \epsilon)$-approximation of $\|g[I_H A]\|_1$ except with inverse polylogarithmic probability.
\end{lemma}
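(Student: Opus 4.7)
The plan is to decompose the argument into two parts: (a) show that Algorithm~\ref{alg:key} correctly identifies the key row $i_0$ (i.e., returns it as the first coordinate of its output), and (b) show that the returned second coordinate, $BA1(A,H)$, is a $(1\pm\epsilon)$-approximation of $u_{g[W],i_0}$. Throughout, let $i_0$ be the unique key row of $g[W]$, noting $H(i_0)=1$ since $i_0$ contributes nonzero weight to $g[W]$.

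For part (a), I would first analyze a single iteration $\ell$. The matrix $W_1 = I_{T_1}A$ contains row $i_0$ iff $H_\ell(i_0) = 1$ and otherwise $W_0$ does. Assume WLOG $H_\ell(i_0)=1$. Then $\|g[W_1]\|_1 \ge u_{g[W],i_0}$, whereas by the key row property $\|g[W_0]\|_1 \le \|g[W]\|_1 - u_{g[W],i_0} < u_{g[W],i_0}/\rho(n,\epsilon)$. Combining with the $r(n)$-approximation of BA2 yields, when both BA2 calls succeed,
\[
\frac{y_1}{y_0} \;\geq\; \frac{\|g[W_1]\|_1/r(n)}{r(n)\|g[W_0]\|_1} \;\geq\; \frac{r^2(n)}{\epsilon} \;\geq\; \tau(n,\epsilon),
\]
with constants chosen inside $\tau(n,\epsilon)$. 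Hence $b_\ell = 1 = H_\ell(i_0)$, and similarly when $H_\ell(i_0)=0$. Since both BA2 calls succeed with probability $\ge 1 - 2\delta_2$ for a small constant $\delta_2$, each iteration satisfies $b_\ell = H_\ell(i_0)$ independently with probability at least some constant, say, $5/6$.

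Next, I would use concentration across the $N = O(\log n)$ independent iterations. By Chernoff, $|\{\ell : b_\ell = H_\ell(i_0)\}| \geq \tfrac{3}{4}N$ except with probability $1/\mathrm{poly}(n)$, so $i_0$ passes the threshold test. To rule out any other row $i'\neq i_0$, note that $b_\ell$ is determined by $H_\ell(i_0)$ with high probability, and by pairwise independence of $H_\ell$ we get $\Pr[H_\ell(i')=b_\ell] \leq \tfrac{1}{2} + 2\delta_2 < \tfrac{5}{8}$. A second Chernoff plus union bound over $n$ rows shows that no $i'\neq i_0$ reaches the $\tfrac{3}{4}N$-threshold, except with $1/\mathrm{poly}(n)$ probability; furthermore, the number of iterations with $b_\ell = 2$ is small (since most iterations are ``good''), so the algorithm does not return $(-1,0)$. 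Thus $i_0$ is the unique row returned.

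For part (b), the key-row property yields $u_{g[W],i_0} \ge \tfrac{\rho(n,\epsilon)}{1+\rho(n,\epsilon)}\|g[W]\|_1$, which is at least $(1-\epsilon_T/2)\|g[W]\|_1$ for $\epsilon_T := 2/(1+\rho(n,\epsilon))$; since $\rho(n,\epsilon)=r^4(n)/\epsilon$, we have $\epsilon_T \ll \epsilon$. Applying Theorem~\ref{BmatrixProof} with parameter $\epsilon_T$ gives $\|g[JW]\|_1 = (1 \pm \epsilon_T)\,u_{g[W],i_0}$. Finally, BA1 run with $\epsilon' = \epsilon/2$ returns a $(1\pm\epsilon/2)$-approximation of $\|g[JW]\|_1$ except with the (inverse polylogarithmic) probability $\delta_1$, so by composition $BA1(A,H) \in (1\pm\epsilon/2)(1\pm\epsilon_T)\,u_{g[W],i_0} \subseteq (1\pm\epsilon)\,u_{g[W],i_0}$. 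A final union bound over all failure events (both BA2 calls in each of $N$ iterations, the two Chernoff concentrations, and the single BA1 call) gives total failure probability bounded by $1/\mathrm{polylog}(n)$, as required. The main obstacle is the per-iteration analysis of Step~1: propagating the key-row slackness $\rho(n,\epsilon)$ through BA2's $r(n)$-approximation in both directions and matching the gap against the threshold $\tau(n,\epsilon)$ requires the threshold constants to be set correctly; once this step is established, the concentration and approximation pieces follow from standard Chernoff and Theorem~\ref{BmatrixProof}.
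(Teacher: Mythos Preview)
Your proposal is correct and follows essentially the same approach as the paper: the paper likewise fixes an iteration $\ell$, propagates the key-row inequality $u_{i_0} > \rho(n,\epsilon)(\|g[W]\|_1 - u_{i_0})$ through the $r(n)$-approximation of $BA2$ to obtain $y_1 \ge \tau(n,\epsilon)\, y_0$, applies Chernoff over the $O(\log n)$ iterations to pin down $i_0$ and rule out every other row via pairwise independence plus a union bound, and then invokes Theorem~\ref{BmatrixProof} together with $BA1$ at $\epsilon' = \epsilon/2$ to certify the returned value. The only cosmetic difference is that the paper fixes the heaviness threshold at $1-\epsilon/8$ (using $r^4(n) \ge 8-\epsilon$) rather than introducing your auxiliary $\epsilon_T$, but the substance is identical.
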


\begin{proof}

Suppose the matrix $g[I_H A]$ has a key row, and let $i_0$ be the index of this row.
We prove that we return a good approximation of $u_{g[W],i_0}$ with high probability.  In particular,
we first argue that, for a fixed iteration $\ell$ of the loop, we have the property that $b_\ell$ equals
$H_{\ell}(i_0)$, and moreover this holds with certainty.  We assume without loss of generality that $H_{\ell}(i_0) = 1$
(the case when $H_{\ell}(i_0) = 0$ is symmetric).  In particular, this implies that the key row $i_0$ appears
in the matrix~$g[W_1]$.

By definition of $BA2$, the following holds for $y_1 = BA2(A,T_1)$ and $y_0 = BA2(A,T_0)$,
except with probability $2\delta_2$ (where $\delta_2$ is the failure probability of $BA2$):
$$y_1 \ge  \frac{\|g[W_1]\|_1}{r(n)} \textrm{ and } y_0 \le \|g[W_0]\|_1 r(n).$$

We have the following set of inequalities:
\begin{equation*} \label{boundSH}
\|g[W_1]\|_1 \geq u_{i_0} > \rho(n,\epsilon)(\|g[W]\|_1 - u_{i_0}) \geq \rho(n,\epsilon)\|g[W_0]\|_1,
\end{equation*}
where the first inequality follows since $g$ is non-negative and the key row $i_0$ appears in the matrix $g[W_1]$ (and hence the $L_1$-norm
of $g[W_1]$ is at least $u_{i_0}$ since it includes the row $i_0$), the second inequality follows by definition of $i_0$ being
a key row for the matrix $W$, and the last inequality follows since the entries in row $i_0$ of the matrix
$W_0$ are all zero (as $H_{\ell}(i_0) = 1$) and the remaining rows of $W_0$ are sampled from $W$, along with the
facts that $g$ is non-negative and $g(0) = 0$.

Substituting for  $\rho(n,\epsilon)$, and using the fact that $y_1$ and $y_0$ are good approximations for
$\|g[W_1]\|_1$ and $\|g[W_0]\|_1$ (respectively), except with probability $2\delta_2$, we get:
\begin{equation*}\label{case1bound}
y_1 \geq \frac{\|g[W_1]\|_1}{r(n)} > \frac{r^3(n)}{\epsilon} \cdot \|g[W_0]\|_1 \geq \frac{r^2(n)}{\epsilon} \cdot y_0 \geq \tau(n,\epsilon) \cdot y_0,
\end{equation*}
and thus in this iteration of the loop we have $b_\ell = 1$ except with probability $2\delta_2$
(in the case that $H_{\ell}(i_0) = 0$, it is easy to verify by
a similar argument that $y_0 \geq \tau(n,\epsilon) \cdot y_1$, and hence we have $b_\ell = 0$).  Hence, for the row $i_0$,
we have the property that $b_\ell = H_{\ell}(i_0)$ for a fixed $\ell$, except with probability $2\delta_2$.
By the Chernoff bound, as long as $\delta_2$ is a sufficiently small constant, we have $b_{\ell} = H_{\ell}(i_0)$
for at least a $\frac{3}{4}$-fraction of iterations $\ell$, except with inverse polynomial probability.  The only issue to consider is the case that there
exists another row $i \neq i_0$ with the same property, namely $b_\ell = H_{\ell}(i)$ for a large fraction of iterations $\ell$.
However, if $b_{\ell} = H_{\ell}(i)$, it must be that at least one of $y_1,y_0$ is a bad approximation or $H_{\ell}(i) = H_{\ell}(i_0)$,
which happens with probability at most $2\delta_2 + \frac{1}{2}$.  Therefore, by the Chernoff bound, the probability that this happens for
at least a $\frac{3}{4}$-fraction of iterations $\ell$ is at most $\frac{1}{2^{O(\log n)}}$, which is inverse polynomially small.
By applying the union bound, the probability that there exists such a row is at most $\frac{n-1}{2^{O(\log n)}}$, which is at most
an inverse polynomial.  Hence, in this case, the algorithm returns $(i_0,BA1(A,H))$ except with inverse polynomial
probability.

We now argue that $\tilde{u}_{g[W],i_o} = BA1(A,H)$ is a $(1\pm \epsilon)$-approximation of $u_{g[W],i_0}$, except with inverse
polylogarithmic probability.  By definition of $BA1$,
which we run with an error parameter of $\epsilon' = \frac{\epsilon}{2}$, it returns a $\left(1 \pm \frac{\epsilon}{2}\right)$-approximation of
$\|g[JW]\|_1$ except with inverse polylogarithmic probability, where $W = I_H A$.  Moreover, since $i_0$ is a key row, we have:
$$ u_{i_0} > \rho(n,\epsilon)(\|g[W]\|_1 - u_{i_0}) \Rightarrow u_{i_0} > \frac{\rho(n,\epsilon)\|g[W]\|_1}{1 + \rho(n,\epsilon)}
\geq \left(1 - \frac{\epsilon}{8}\right)\|g[W]\|_1, $$
where the last inequality follows as long as $r^4(n) \geq 8 - \epsilon$.  This implies that $i_0$ is $\left(1 - \frac{\epsilon}{8}\right)$-heavy
with respect to the matrix $g[W]$, and hence we can apply Theorem~\ref{BmatrixProof} to get that:
\begin{align*}
(1 \pm \epsilon)u_{i_0} \geq \frac{\left(1 + \frac{\epsilon}{2}\right)}{\left(1 - \frac{\epsilon}{4}\right)}u_{i_0}
&\geq \left(1 + \frac{\epsilon}{2}\right)\|g[JW]\|_1 \geq \tilde{u}_{g[W],i_0}\\
&\geq \left(1 - \frac{\epsilon}{2}\right)\|g[JW]\|_1 \geq
\frac{\left(1 - \frac{\epsilon}{2}\right)}{\left(1+\frac{\epsilon}{4}\right)}u_{i_0} \geq (1-\epsilon)u_{i_0},
\end{align*}
where the first inequality holds for any $0 < \epsilon \leq 1$, the second inequality holds by Theorem~\ref{BmatrixProof},
the third inequality holds since $\tilde{u}_{g[W],i_0}$ is a $\left(1 \pm \frac{\epsilon}{2}\right)$-approximation of $\|g[JW]\|_1$,
and the rest hold for similar reasons.  Hence, our algorithm returns a good approximation as long as $BA1$ succeeds.  Noting that
this happens except with inverse polylogarithmic probability gives the lemma.
\end{proof}

\begin{lemma}\label{lem:case2} If the input matrix has no $\alpha$-heavy row, where $\alpha = 1 - \frac{\epsilon}{8}$,
then with high probability Algorithm~\ref{alg:key} correctly returns $(-1,0)$.
\end{lemma}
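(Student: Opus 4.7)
The plan is to argue that with high probability Algorithm~\ref{alg:key} outputs $(-1,0)$. Since the algorithm returns anything other than $(-1,0)$ only when some row $i$ satisfies $|\{\ell : H_\ell(i) = b_\ell\}| \geq (3/4)N$, it suffices to show that with high probability no row has this property. My plan is to bound, for every fixed row $i \in [n]$, the per-iteration probability $p_i := \Pr[b_\ell = H_\ell(i)]$ by a constant strictly less than $3/4$, then apply a Chernoff bound over the $N = \Theta(\log n)$ independent iterations and a union bound over the $n$ candidate rows.

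To bound $p_i$, I first use the symmetry of the uniform pairwise-independent hash $H_\ell$ (together with the symmetric treatment of the labels $0,1$ in the algorithm) to reduce to $p_i = \Pr[b_\ell = 1 \mid H_\ell(i) = 1]$. Conditioning on $H_\ell(i) = 1$, I write $\|g[W_1]\|_1 = u_i + X$ and $\|g[W_0]\|_1 = Y$, where $X = \sum_{j \neq i,\, H_\ell(j) = 1} u_j$, $Y = \sum_{j \neq i,\, H_\ell(j) = 0} u_j$, and $S := X + Y = \|g[W]\|_1 - u_i$. The case hypothesis $u_i \leq (1 - \epsilon/8)\|g[W]\|_1$ yields $S \geq (\epsilon/8)\|g[W]\|_1$, while the $r(n)$-approximation guarantee of $BA2$ ensures that, except with probability $2\delta_2$, the event $b_\ell = 1$ forces the ratio $\|g[W_1]\|_1/\|g[W_0]\|_1$ to exceed $\tau(n,\epsilon)/r^2(n) = \Omega(1/\epsilon)$.

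Next, I split on whether there is a $\frac{1}{16}$-heavy coordinate in $\{u_j\}_{j \neq i}$. If there is none, Lemma~\ref{lem:balanced} applied to $(u_j)_{j \neq i}$ yields $Y \geq S/4$ with probability at least $3/4$, so the ratio $\|g[W_1]\|_1/\|g[W_0]\|_1$ is at most $4\|g[W]\|_1/S \leq 32/\epsilon$, which falls below the decision threshold and rules out $b_\ell = 1$. If instead some $j^* \neq i$ has $u_{j^*} > S/16$, then pairwise independence gives $\Pr[H_\ell(j^*) = 0 \mid H_\ell(i) = 1] = 1/2$, and whenever this occurs $Y \geq u_{j^*} > S/16$, so the ratio is at most $16\|g[W]\|_1/S \leq 128/\epsilon$ and again $b_\ell \neq 1$. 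In either sub-case $p_i \leq 1/2 + 2\delta_2 < 3/4$ for a sufficiently small constant $\delta_2$.

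The main obstacle I anticipate is the heavy sub-case: once a $\frac{1}{16}$-heavy element is present, Lemma~\ref{lem:balanced} is unavailable, so the argument must lean on a single pairwise-coordinate event, which gives only a constant (rather than high-probability) bound per iteration. This in turn requires the hidden constant in the threshold $\tau(n,\epsilon) = 2r^2(n)/O(\epsilon)$ to be tuned so that $\tau(n,\epsilon)/r^2(n)$ exceeds $128/\epsilon$, making the lower bound $Y \geq S/16$ alone sufficient to push the ratio below the decision threshold. Once this calibration is in place, the Chernoff step over $N = \Theta(\log n)$ iterations amplifies each $p_i$-bound to inverse-polynomial error, and the union bound over the $n$ rows completes the proof.
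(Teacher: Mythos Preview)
Your proposal is correct and reaches the same conclusion as the paper, but by a genuinely different route. The paper argues that in each iteration $b_\ell = 2$ with at least constant probability (splitting into three \emph{global} subcases: no $\tfrac{1}{16}$-heavy row; a single dominant row with all others tiny; two comparably large rows), and then uses Chernoff to show the algorithm's \emph{first} test $|\{\ell:b_\ell=2\}|\ge \tfrac{2}{3}N$ fires. You instead fix a candidate row $i$, bound $\Pr[b_\ell=H_\ell(i)]$ by a constant below $3/4$ via two $i$-dependent subcases, and conclude that the \emph{second} test cannot be passed by any row. Your decomposition is tighter (two cases rather than three) and is agnostic to which branch the algorithm takes; it also sidesteps the delicate point in the paper's third subcase where $\Pr[b_\ell=2]$ is only guaranteed to be about $\tfrac{1}{2}$, which sits below the $\tfrac{2}{3}$ threshold.

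One technical wrinkle to tighten: when you invoke Lemma~\ref{lem:balanced} on $(u_j)_{j\neq i}$ \emph{after conditioning} on $H_\ell(i)=1$, pairwise independence of $(H_\ell(j))_{j\neq i}$ need not survive the conditioning, so the lemma does not directly give you a conditional $3/4$ bound. The fix is painless: apply Lemma~\ref{lem:balanced} to $(u_j)_{j\neq i}$ with the \emph{unrestricted} hash (the restriction to $[n]\setminus\{i\}$ is still pairwise independent), obtaining $\Pr[Y'\le S/4]\le 1/4$ unconditionally, and then note $\Pr[Y'\le S/4 \mid H_\ell(i)=1]\le (1/4)/(1/2)=1/2$. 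This still yields $p_i\le 1/2+2\delta_2<3/4$, which is all you need for the Chernoff-plus-union-bound finish. Your calibration remark about the hidden constant in $\tau(n,\epsilon)$ is on point and matches what the paper itself implicitly requires (its third subcase needs $\tau\ge 2048\,r^2(n)/\epsilon$).
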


\begin{proof}

In this case, we have no $\alpha$-heavy row for $\alpha = 1 - \frac{\epsilon}{8}$, which implies
that $u_i \leq \alpha \|g[W]\|_1 = \left(1-\frac{\epsilon}{8}\right)\|g[W]\|_1$ for each row $i$ in the matrix $g[W]$.
In this case, we show the probability that Algorithm~\ref{alg:key} returns a false positive is small.
That is, with high probability, in each iteration $\ell$ of the loop the algorithm sets $b_\ell = 2$, and hence it
returns $(-1,0)$.  We split this case into three additional disjoint and exhaustive subcases, defined as follows:
\begin{enumerate}
\item For each row $i$, we have $u_i \leq \frac{1}{16} \|g[W]\|_1$.
\item There exists a row $i$ with $u_i > \frac{1}{16} \|g[W]\|_1$ and $\forall j \neq i$ we have $u_j \leq \frac{\epsilon}{128}u_i$.
\item There exist two distinct rows $i,j$ where $u_i > \frac{1}{16} \|g[W]\|_1$ and $u_j > \frac{\epsilon}{128}u_i$.
\end{enumerate}
We define $X = \sum_i h_i^{\ell} u_i$ and $Y = \sum_i \bar{h}_i^{\ell} u_i$, where $h_i^{\ell} = H_{\ell}(i)$ and $\bar{h}_i^{\ell} = \bar{H}_{\ell}(i)$. 
Hence, we have $X = \|g[W_1]\|_1$ and $Y = \|g[W_0]\|_1$, and moreover $X + Y = \|g[W]\|_1$ (recall that $g[W_1] = g[I_{T_1}A]$ and
$g[W_0] = g[I_{T_0}A]$). 

In the first subcase, where there is no $\frac{1}{16}$-heavy row, we can apply Lemma~\ref{lem:balanced} to the vector
$(u_1,\ldots,u_n)$ to get that:

$$\Pr\left[\left(X \leq \frac{\|g[W]\|_1}{4} \right) \cup \left(Y \leq \frac{\|g[W]\|_1}{4} \right) \right] \leq \frac{1}{4}.$$

By definition of $BA2$, the following holds for $y_1 = BA2(A,T_1)$ and $y_0 = BA2(A,T_0)$ except with probability $2\delta_2$,
where $\delta_2$ is the success probability of $BA2$:
$$\frac{\|g[W_1]\|_1}{r(n)} \leq y_1 \leq r(n)\|g[W_1]\|_1 \textrm{ , } \quad \frac{\|g[W_0]\|_1}{r(n)} \leq y_0 \leq  r(n)\|g[W_0]\|_1.$$

Hence, except with probability $\frac{1}{4} + 2\delta_2$, we have the following constraints on $y_0$ and $y_1$:
\begin{align*}
y_0 \leq r(n)Y \leq r(n) \cdot \frac{3}{4} \cdot \|g[W]\|_1 &\leq 3 r(n) X \leq 3 y_1 r^2(n) \leq \tau(n,\epsilon) \cdot y_1, \textrm{ and}\\
y_1 \leq r(n)X \leq r(n) \cdot \frac{3}{4} \cdot \|g[W]\|_1 &\leq 3 r(n) Y \leq 3 y_0 r^2(n) \leq \tau(n,\epsilon) \cdot y_0,
\end{align*}
in which case we set $b_\ell = 2$.  If $\delta_2$ is some small constant, say $\delta_2 \leq \frac{1}{32}$, then for
a fixed iteration $\ell$, we set $b_\ell = 2$ except with probability $\frac{5}{16}$.  Now, applying the Chernoff bound,
we can show that the probability of having more than a $\frac{2}{5}$-fraction of iterations $\ell$ with $b_{\ell} \neq 2$ is at most
an inverse polynomial.  Hence, in this subcase the algorithm outputs $(-1,0)$, except with inverse polynomial probability.

In the second subcase, we have $u_i > \frac{1}{16} \|g[W]\|_1$ and, for all $j \neq i$, $u_j \leq \frac{\epsilon}{128}u_i$.
Then, since $u_i$ is not $\left(1-\frac{\epsilon}{8}\right)$-heavy with respect to $g[W]$, we have:
$$u_j \leq \frac{\epsilon}{128}\cdot u_i \leq \frac{1}{16}(\|g[W]\|_1 - u_i).$$
Hence, we can apply Lemma~\ref{lem:balanced} to the vector $U = (u_1,\ldots,u_{i-1},0,u_{i+1},\ldots,u_n)$ (since
$\|U\|_1 = \|g[W]\|_1 - u_i$, and moreover each entry in $U$ is at most $\frac{1}{16}\|U\|_1$).  Letting
$X' = \sum_{j \neq i}h_j^{\ell}u_j$ and $Y' = \sum_{j \neq i}\bar{h}_j^{\ell}u_j$, we get that:
$$ \Pr\left[\left(X' \leq \frac{1}{4} \cdot \|U\|_1\right) \cup \left(Y' \leq \frac{1}{4} \cdot \|U\|_1\right)\right] \leq \frac{1}{4}.$$
This implies that $X \geq X' > \frac{1}{4}(\|g[W]\|_1 - u_i) \geq \frac{\epsilon}{32}\|g[W]\|_1$ and
$Y \geq Y' > \frac{1}{4}(\|g[W]\|_1 - u_i) \geq \frac{\epsilon}{32}\|g[W]\|_1$.
Moreover, except with probability $2\delta_2$, $y_1$ and $y_0$ are good approximations to $\|g[W_1]\|_1$ and $\|g[W_0]\|_1$, respectively.
Thus, except with probability $\frac{1}{4} + 2\delta_2$, we have:
\begin{align*}
y_0 \leq r(n)Y \leq r(n) \left(1 - \frac{\epsilon}{32}\right) \|g[W]\|_1
&\leq r(n) \left(1 - \frac{\epsilon}{32}\right) \cdot \frac{32}{\epsilon} \cdot X \leq \frac{32r^2(n)}{\epsilon} \cdot y_1 \leq \tau(n,\epsilon) \cdot y_1, \textrm{ and}\\
y_1 \leq r(n)X \leq r(n) \left(1 - \frac{\epsilon}{32}\right) \|g[W]\|_1
&\leq r(n) \left(1 - \frac{\epsilon}{32}\right) \cdot \frac{32}{\epsilon} \cdot Y \leq \frac{32r^2(n)}{\epsilon} \cdot y_0 \leq \tau(n,\epsilon) \cdot y_0.
\end{align*}
This implies that, except with probability $\frac{1}{4} + 2\delta_2$, the algorithm sets $b_\ell = 2$ for each iteration $\ell$.
Applying the Chernoff bound again, we see that the probability of having more than a $\frac{2}{5}$-fraction of iterations $\ell$
with $b_{\ell} \neq 2$ is at most an inverse polynomial.  Thus, in this subcase, the algorithm outputs $(-1,0)$ except with inverse polynomial
probability.

We now consider the last subcase, where $u_i > \frac{1}{16} \|g[W]\|_1$ and there exists $j \neq i$ such that $u_j > \frac{\epsilon}{128}u_i$.
Note that the probability that $i$ and $j$ get mapped to different matrices is given by $\Pr[H_{\ell}(i) \neq H_{\ell}(j)] = \frac{1}{2}$.  Assume
without loss of generality that $H_{\ell}(j) = 1$ (the case that $H_{\ell}(j) = 0$ is symmetric).  In the event
that $i$ and $j$ get mapped to difference matrices and $y_1,y_0$ are good approximations to $\|g[W_1]\|_1, \|g[W_0]\|_1$ respectively,
which happens with probability at least $\frac{1}{2} - 2\delta_2$, we have:
\begin{align*}
y_1 \geq \frac{X}{r(n)} \geq \frac{u_j}{r(n)} \geq \frac{\epsilon}{128r(n)} \cdot u_i &\geq \frac{\epsilon}{128r(n)} \cdot \frac{1}{16} \cdot \|g[W]\|_1 \geq
\frac{\epsilon}{2048r(n)} \cdot Y \geq \frac{\epsilon}{2048r^2(n)} \cdot y_0\\
&\Longrightarrow y_0 \leq \frac{2048r^2(n)}{\epsilon} \cdot y_1 \leq \tau(n,\epsilon) \cdot y_1 \\
y_0 \geq \frac{Y}{r(n)} \geq \frac{u_i}{r(n)} \geq \frac{\epsilon}{128r(n)} \cdot u_i &\geq \frac{\epsilon}{128r(n)} \cdot \frac{1}{16} \cdot \|g[W]\|_1 \geq
\frac{\epsilon}{2048r(n)} \cdot X \geq \frac{\epsilon}{2048r^2(n)} \cdot y_1\\
&\Longrightarrow y_1 \leq \frac{2048r^2(n)}{\epsilon} \cdot y_0 \leq \tau(n,\epsilon) \cdot y_0.
\end{align*}
Thus, except with probability at least $\frac{1}{2} - 2\delta_2$, the algorithm sets $b_\ell = 2$ for each iteration $\ell$.  We
apply the Chernoff bound again to get that $b_{\ell} = 2$ for at least a $\frac{2}{5}$-fraction of iterations, except with
inverse polynomial probability.  Hence, the algorithm outputs $(-1,0)$ except with inverse polynomial probability.

\end{proof}

\begin{lemma}\label{lem:case3}
If the matrix $g[I_H A]$ does not have a key row but has an $\alpha$-heavy row $i_0$, where $\alpha = 1 - \frac{\epsilon}{8}$,
then Algorithm~\ref{alg:key} either returns $(-1,0)$ or returns a $(1 \pm \epsilon)$-approximation of $u_{I_H A,i_0}$ and
the corresponding row $i_0$ with high probability.
\end{lemma}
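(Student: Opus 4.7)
\textbf{Proof proposal for Lemma~\ref{lem:case3}.}

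The plan is to show that, with high probability, the only row that can possibly appear in the algorithm's output is $i_0$ itself, so the final output is either $(-1,0)$ or $(i_0, BA1(A,H))$; then invoke Theorem~\ref{BmatrixProof} to verify that the second coordinate in the latter case is a $(1\pm\epsilon)$-approximation of $u_{i_0}$.

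First I would establish the structural observation that, for each iteration $\ell$, we have $b_\ell \in \{H_\ell(i_0), 2\}$ except with probability $2\delta_2$ (over the two $BA2$ calls). Fix $\ell$ and, by symmetry, assume $H_\ell(i_0)=1$, so that row $i_0$ appears in $W_1$ but not in $W_0$. Then $\|g[W_1]\|_1 \geq u_{i_0} \geq (1-\epsilon/8)\|g[W]\|_1$ while $\|g[W_0]\|_1 \leq \|g[W]\|_1 - u_{i_0} \leq (\epsilon/8)\|g[W]\|_1$. Conditioning on both $BA2$ calls succeeding, $y_0 \leq r(n)\|g[W_0]\|_1$ and $y_1 \geq \|g[W_1]\|_1 / r(n)$, so $y_0/y_1 \leq r^2(n)(\epsilon/8)/(1-\epsilon/8)$, which for sufficiently small $\epsilon$ is far below $\tau(n,\epsilon) = 2r^2(n)/O(\epsilon)$. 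Hence $b_\ell \neq 0$, i.e., $b_\ell \in \{1,2\} = \{H_\ell(i_0), 2\}$. The symmetric argument handles $H_\ell(i_0) = 0$.

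Next I would use this structural fact to rule out any row $i \neq i_0$ passing the $\tfrac{3}{4}N$ threshold. Since $H_\ell(i) \in \{0,1\}$ and $b_\ell \in \{H_\ell(i_0), 2\}$, the equality $H_\ell(i) = b_\ell$ forces $H_\ell(i) = H_\ell(i_0)$, which by pairwise independence of $H_\ell$ happens with probability $1/2$; allowing also for a $BA2$ failure gives $\Pr[H_\ell(i) = b_\ell] \leq 1/2 + 2\delta_2$. Across iterations these events are independent (each loop iteration draws its own $H_\ell$ and its own $BA2$ randomness), so Chernoff gives $\Pr[|\{\ell : H_\ell(i) = b_\ell\}| \geq \tfrac{3}{4}N] \leq n^{-c}$ for a sufficiently large constant $c$ once $N = \Theta(\log n)$ is chosen appropriately. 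A union bound over the $n-1$ candidate rows $i \neq i_0$ then shows that, except with inverse polynomial probability, no row other than $i_0$ satisfies the return condition, and hence the algorithm either outputs $(-1,0)$ or outputs $(i_0, BA1(A,H))$.

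Finally, in the latter case I would verify that $BA1(A,H)$ is a $(1\pm\epsilon)$-approximation of $u_{i_0}$. Since $i_0$ is $(1-\epsilon/8)$-heavy with respect to $g[W]$, Theorem~\ref{BmatrixProof} applied with parameter $\epsilon/4$ yields $|u_{i_0} - \|g[JW]\|_1| \leq (\epsilon/4)\|g[JW]\|_1$, and since $BA1$ is run with $\epsilon' = \epsilon/2$ and inverse polylogarithmic failure probability, its output lies in $[(1-\epsilon/2)\|g[JW]\|_1, (1+\epsilon/2)\|g[JW]\|_1]$; chaining the two relations gives a $(1\pm\epsilon)$-factor approximation of $u_{i_0}$, exactly as in the final chain of inequalities in Lemma~\ref{lem:case1}. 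The main obstacle is really the first step, the clean containment $b_\ell \in \{H_\ell(i_0), 2\}$, since it is what converts an awkward intermediate regime (the row $i_0$ is heavy but not \emph{key}-heavy, so the pair $(y_0,y_1)$ could plausibly land in any of the three branches) into a setting where the Chernoff/union-bound machinery used in Lemma~\ref{lem:case1} can be transplanted almost verbatim.
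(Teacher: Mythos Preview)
Your proposal is correct and follows essentially the same route as the paper: the paper's proof is little more than a pointer back to Lemma~\ref{lem:case1}, asserting that if $H_\ell(i)=b_\ell$ for $i\neq i_0$ then either a $BA2$ call failed or $H_\ell(i)=H_\ell(i_0)$, and then applying Chernoff and a union bound exactly as you do. Your structural containment $b_\ell\in\{H_\ell(i_0),2\}$ (conditional on $BA2$ succeeding) is precisely the contrapositive of that assertion, and your use of Theorem~\ref{BmatrixProof} together with the $BA1$ guarantee to certify the returned weight mirrors the paper's ``follows similarly as in Lemma~\ref{lem:case1}.'' The only cosmetic difference is that you spell out the ratio bound $y_0/y_1 \le r^2(n)\cdot(\epsilon/8)/(1-\epsilon/8) < \tau(n,\epsilon)$ explicitly, whereas the paper leaves this implicit; note that this inequality in fact holds for all $\epsilon\in(0,1)$, not merely ``sufficiently small $\epsilon$.''
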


\begin{proof}
We know there is an $\alpha$-heavy row, but not a key row.  Note that there cannot be more than one $\alpha$-heavy row
for $\alpha = 1 - \frac{\epsilon}{8}$.  If the algorithm returns $(-1,0)$, then the lemma holds
(note the algorithm is allowed to return $(-1,0)$ since there is no key row).  If the algorithm returns
a pair of the form $(i,BA1(A,H))$, we know from Theorem~\ref{BmatrixProof} that the approximation of the
weight of the $\alpha$-heavy row is a $(1 \pm \epsilon)$-approximation of $\|g[W]\|_1$ as long as $BA1$ succeeds,
which happens except with inverse polylogarithmic probability (the argument that the approximation is good follows
similarly as in Lemma~\ref{lem:case1}).  We need only argue that we return the correct index, $i_0$.  Again,
the argument follows similarly as in Lemma~\ref{lem:case1}.  In particular, if $H_{\ell}(i) = b_{\ell}$ for a fixed
iteration $\ell$, then at least one of $y_0,y_1$ is a bad approximation or $H_{\ell}(i_0) = H_{\ell}(i)$,
which happens with probability at most $2\delta_2 + \frac{1}{2}$ (where $\delta_2$ is the failure probability
of $BA2$).  We then apply the Chernoff bound, similarly as before.
\end{proof}

With Lemmas~\ref{lem:case1},~\ref{lem:case2}, and~\ref{lem:case3}, we are done proving that Algorithm~\ref{alg:key} fits
the description of Definition~\ref{HRalgdef}, except with inverse polylogarithmic probability.

\section{Recursive Sketches}\label{recursive}

\noindent Definition of a Cover:

\begin{definition}
A non-empty set $Q \in Pairs_t$, i.e., $Q = \{(i_1, w_1), \dots, (i_t, w_t)\}$ for some $t\in [n]$, is an
$(\alpha, \epsilon)$-cover with respect to the vector $V \in [M]^n$ if the following is true:
\begin{enumerate}
\item $\forall j\in [t]$ $(1-\epsilon)v_{i_j} \le w_{j} \le (1 \pm \epsilon)v_{i_j} $.
\item $\forall i\in [n]$ if $v_i$ is $\alpha$-heavy then $\exists j\in [t]$ such that $i_j = i$.
\end{enumerate}
\end{definition}

\begin{definition}
Let $\mathcal{D}$ be a probability distribution on $Pairs$.
Let $V \in [m]^n$ be a fixed vector.
We say that $\mathcal{D}$ is $\delta$-good with respect to $V$ if for a random element $Q$ of Pairs with distribution D the following is true:
$$
P(Q \text{  is an } (\alpha, \epsilon)\text{-cover of } V ) \ge 1-\delta.
$$
\end{definition}

\noindent Using notation from~\cite{recursive1}, for a vector $V = (v_1,\ldots,v_n)$, we let $|V|$ denote the $L_1$ norm of $V$,
$|V| = \sum_{i=1}^nv_i$.  Consider Algorithm 6 from \cite{recursive1}:

{\begin{algorithm}[H] \caption{Recursive Sum $(D, \epsilon)$}

\begin{enumerate}

\item Generate $\phi=O(\log(n))$ pairwise independent zero-one vectors $H_1, \dots, H_\phi$. Denote by $D_j$ the stream $D_{H_1H_2\dots H_\phi}$

\item Compute, in parallel, $Q_j = HH(D_j, {\epsilon^2\over\phi^3}, \epsilon, {1\over \phi})$

\item If $F_0(V_\phi) > 10^{10}$ then output $0$ and stop. Otherwise, compute precisely $Y_\phi =|V_\phi |$

\item For each $j = \phi-1,\dots, 0$, compute $$Y_j = 2Y_{j+1} - \sum_{i\in Ind(Q_j)} (1-2h_{i}^j)w_{Q_j}(i)$$

\item Output $Y_0$

\end{enumerate}
\label{RecursiveSketch}
\end{algorithm}

\noindent Theorem 4.1 from \cite{recursive1}:

\begin{theorem}
Algorithm~\ref{RecursiveSketch} computes a $(1\pm \epsilon)$-approximation of $|V|$ and errs with probability at most $0.3$.
The algorithm uses $O(\log(n)\mu(n, {1\over \epsilon^2\log^3(n)}, \epsilon, {1\over \log(n)}))$ bits of memory,
where $\mu$ is the space required by the above algorithm $HH$.
\end{theorem}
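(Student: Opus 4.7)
My plan proceeds by separating the correctness analysis from the space bound.

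For correctness, the core observation is the telescoping identity
\[
|V_j| = 2|V_{j+1}| + \sum_i v_i^{(j)}(1 - 2 h_i^{j+1}),
\]
which rewrites the norm of the vector at level $j$ in terms of the norm of its half-size subsample at level $j+1$ plus a signed ``discrepancy'' term coming from the additional hash $H_{j+1}$. The algorithm is then recognized as a plug-in estimator that replaces this discrepancy by the discrepancy restricted to the cover $Q_j$. I would first set up the good event: all $\phi = O(\log n)$ cover invocations succeed (union bound over per-call failure probability $1/\phi$), and $F_0(V_\phi) \le 10^{10}$, which follows from a standard pairwise-independent subsampling argument since the expected support size at depth $\phi$ is $n \cdot 2^{-\phi} = O(1)$. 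On this event, the base case $Y_\phi = |V_\phi|$ is exact.

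Next, writing $E_j = Y_j - |V_j|$, I would derive the recursion $E_j = 2E_{j+1} + \Delta_j$, and split $\Delta_j$ into (i) a deterministic part from the $(1\pm\epsilon)$-approximation of heavy-element weights inside the cover, bounded by $\epsilon |V_j|$, and (ii) a mean-zero signed-sum contribution from coordinates outside the cover. Because no index outside the cover is $\alpha$-heavy with $\alpha = \epsilon^2/\phi^3$, the variance of (ii) is at most $\alpha|V_j|^2$, and Chebyshev yields a per-level error of $O(\epsilon|V_j|/\phi)$ with probability $1 - 1/\phi^{O(1)}$, which I would union-bound over all $\phi$ levels. Unrolling the recursion, the factor $2^j$ amplification on $E_j$ is matched by $|V_j| \approx 2^{-j}|V|$ (another pairwise-independence concentration step), so the total error is $O(\epsilon |V|)$, giving a $(1 \pm \epsilon)$-approximation with overall failure probability at most $0.3$.

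The main obstacle is controlling the factor-of-two amplification through the $\phi$ levels of recursion: a naive bound would blow up as $\epsilon \cdot 2^\phi = \epsilon \cdot \mathrm{poly}(n)$. This is precisely why the cover parameter must be sharpened to $\alpha = \epsilon^2/\phi^3$ rather than simply $\epsilon$, and why the per-invocation failure probability must be $1/\phi$ rather than constant: we need simultaneous slack for a union bound across $\phi$ levels and for enough variance reduction in the random-sign term to survive multiplication by $2^\phi$. Once correctness is in hand, the space bound follows immediately: the algorithm runs $\phi$ parallel instances of $HH$ with parameters $(n, \epsilon^2/\log^3 n, \epsilon, 1/\log n)$ and maintains $O(\log n)$ additional bits per level for the running $Y_j$, giving total memory $O\bigl(\log(n)\cdot \mu(n,\epsilon^2/\log^3 n,\epsilon,1/\log n)\bigr)$ as claimed.
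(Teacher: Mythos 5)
First, a point of comparison: the paper you are reading does not prove this statement at all. It is quoted verbatim as Theorem 4.1 of the Recursive Sketches paper \cite{recursive1} and used purely as a black box, so your sketch has to be judged against the original proof, whose error accounting is more delicate than what you propose.

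Your architecture is the right starting point: the telescoping identity $|V_j| = 2|V_{j+1}| + \sum_i (1-2h_i^{j+1})v_i^{(j)}$, covers handling heavy coordinates, Chebyshev on the out-of-cover mean-zero remainder, a union bound over the $\phi$ levels, Markov for the $F_0(V_\phi)$ check, and the space bound (which is indeed immediate). The gap is exactly at the step you yourself flag as the main obstacle, and your resolution does not close it. (1) You bound the in-cover error at level $j$ by a deterministic $\epsilon|V_j|$ and reserve the $1/\phi$ saving only for the out-of-cover part; after the $2^j$ amplification and summation over $\phi = O(\log n)$ levels, even under your own assumption $|V_j|\approx 2^{-j}|V|$ this yields $\Theta(\epsilon \log n \cdot |V|)$, i.e.\ a $(1\pm\epsilon\log n)$-approximation, not $(1\pm\epsilon)$ --- and you cannot repair it by calling $HH$ with accuracy $\epsilon/\log n$ without changing the space bound claimed in the theorem, since $\mu$ is invoked with third parameter $\epsilon$. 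Closing this requires exploiting that $w_{Q_j}(i)-v_i^{(j)}$ is independent of the fresh sign hash $H_{j+1}$ (so the in-cover error is itself a mean-zero signed sum) together with a treatment of how deep heavy coordinates survive: for a vector with a single coordinate $v_1=|V|$ the accumulated error is roughly $2^{j^*}\epsilon|V|$ where $j^*$ is the (geometrically distributed) level at which that coordinate dies, so neither a per-level worst-case bound nor a per-level expectation bound suffices; one must argue about the aggregated error at constant failure probability. (2) The auxiliary claim that $|V_j|\approx 2^{-j}|V|$ follows from ``another pairwise-independence concentration step'' is false in precisely the interesting regime: if $v_1\geq |V|/2$, then on the probability-$2^{-j}$ event that coordinate $1$ survives to level $j$ one has $|V_j|\geq |V|/2$. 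Only $\mathbb{E}[2^j|V_j|]=|V|$ is available, so any use of this quantity must go through Markov on the total error rather than per-level concentration. As written, then, the proposal would establish a weaker approximation guarantee than the theorem states.
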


\end{document}